\definecolor{NYUlight}{HTML}{8900e1} 	
\newtheorem{thm}{Theorem}
\newtheorem{lem}{Lemma}
\newtheorem{prop}{Proposition}
\theoremstyle{definition}
\newtheorem{defn}{Definition}
\newtheorem{rem}{Remark}
\def\BibTeX{{\rm B\kern-.05em{\sc i\kern-.025em b}\kern-.08em
    T\kern-.1667em\lower.7ex\hbox{E}\kern-.125emX}}
\newcommand{\blue}[1]{{\color{blue}#1}}
\newcommand{\indep}{\raisebox{0.05em}{\rotatebox[origin=c]{90}{$\models$}}}
\DeclareMathOperator*{\argmax}{\arg\!\max}
\DeclareMathOperator*{\argmin}{\arg\!\min}
\begin{document}

\title{Distribution-Agnostic Database De-Anonymization Under Obfuscation And Synchronization Errors\\
\thanks{This work was presented in part at the 2023 IEEE International Workshop on Information Forensics and Security (WIFS), Nuremberg, Germany. This work is supported in part by National Science Foundation grants 2148293, 2003182, and 1815821, and NYU WIRELESS Industrial Affiliates.}}

\author{Serhat Bakirtas, Elza Erkip\\
 NYU Tandon School of Engineering\\
Emails: \{serhat.bakirtas, elza\}@nyu.edu }

\maketitle

\begin{abstract}
    Database de-anonymization typically involves matching an anonymized database with correlated publicly available data. Existing research focuses either on practical aspects without requiring knowledge of the data distribution yet provides limited guarantees, or on theoretical aspects assuming known distributions. This paper aims to bridge these two approaches, offering theoretical guarantees for database de-anonymization under synchronization errors and obfuscation without prior knowledge of data distribution. Using a modified replica detection algorithm and a new seeded deletion detection algorithm, we establish sufficient conditions on the database growth rate for successful matching, demonstrating a double-logarithmic seed size relative to row size is sufficient for detecting deletions in the database. Importantly, our findings indicate that these sufficient de-anonymization conditions are tight and are the same as in the distribution-aware setting, avoiding asymptotic performance loss due to unknown distributions. Finally, we evaluate the performance of our proposed algorithms through simulations, confirming their effectiveness in more practical, non-asymptotic, scenarios.
\end{abstract}

\begin{IEEEkeywords}
dataset, database, matching, de-anonymization, alignment, distribution-agnostic, privacy, synchronization, obfuscation
\end{IEEEkeywords}
\section{Introduction}
\label{sec:introduction}
The accelerating growth of smart devices and applications has accelerated the collection of user-level micro-data both by private companies and public institutions. This data is often shared and/or sold after removing explicit user identifiers, a.k.a. \emph{anonymization}, and coarsening of the data through noise, a.k.a. \emph{obfuscation}. Despite these efforts, there is a growing concern about the privacy implications~\cite{ohm2009broken}. These concerns were further justified by the success of a series of practical de-anonymization attacks on real data~\cite{naini2015you,datta2012provable,narayanan2008robust,sweeney1997weaving,takbiri2018matching}. In the light of these successful attacks, recently there has been an increasing effort on understanding the information-theoretic and statistical foundations of \emph{database de-anonymization}, a.k.a. \emph{database alignment/matching/recovery}~\cite{cullina,shirani8849392,dai2019database,kunisky2022strong,tamir2023correlation,bakirtas2021database,bakirtas2022matching,bakirtas2022seeded,bakirtas2023database,bakirtas2022database}.

Our recent work focuses on the database de-anonymization problem under synchronization errors. In~\cite{bakirtas2022matching}, we investigated the matching of Markov databases under synchronization errors only, with no subsequent obfuscation/noise, using a histogram-based detection method. In~\cite{bakirtas2022seeded,bakirtas2023database}, we extended this effort to databases with noisy synchronization errors. We introduced a noisy replica detection algorithm, and a seeded deletion detection algorithm, and used joint-typicality-based matching to derive achievability results, which we subsequently showed to be tight, for a seed size double logarithmic with the row size of the database. In~\cite{bakirtas2022matching,bakirtas2022seeded,bakirtas2023database} we assumed that the underlying distributions are known and tailored repetition detection and matching algorithms for these known distributions. 

\begin{figure}[t]
\centerline{\includegraphics[width=0.75\textwidth,trim={0cm 9cm 17cm 0},clip]{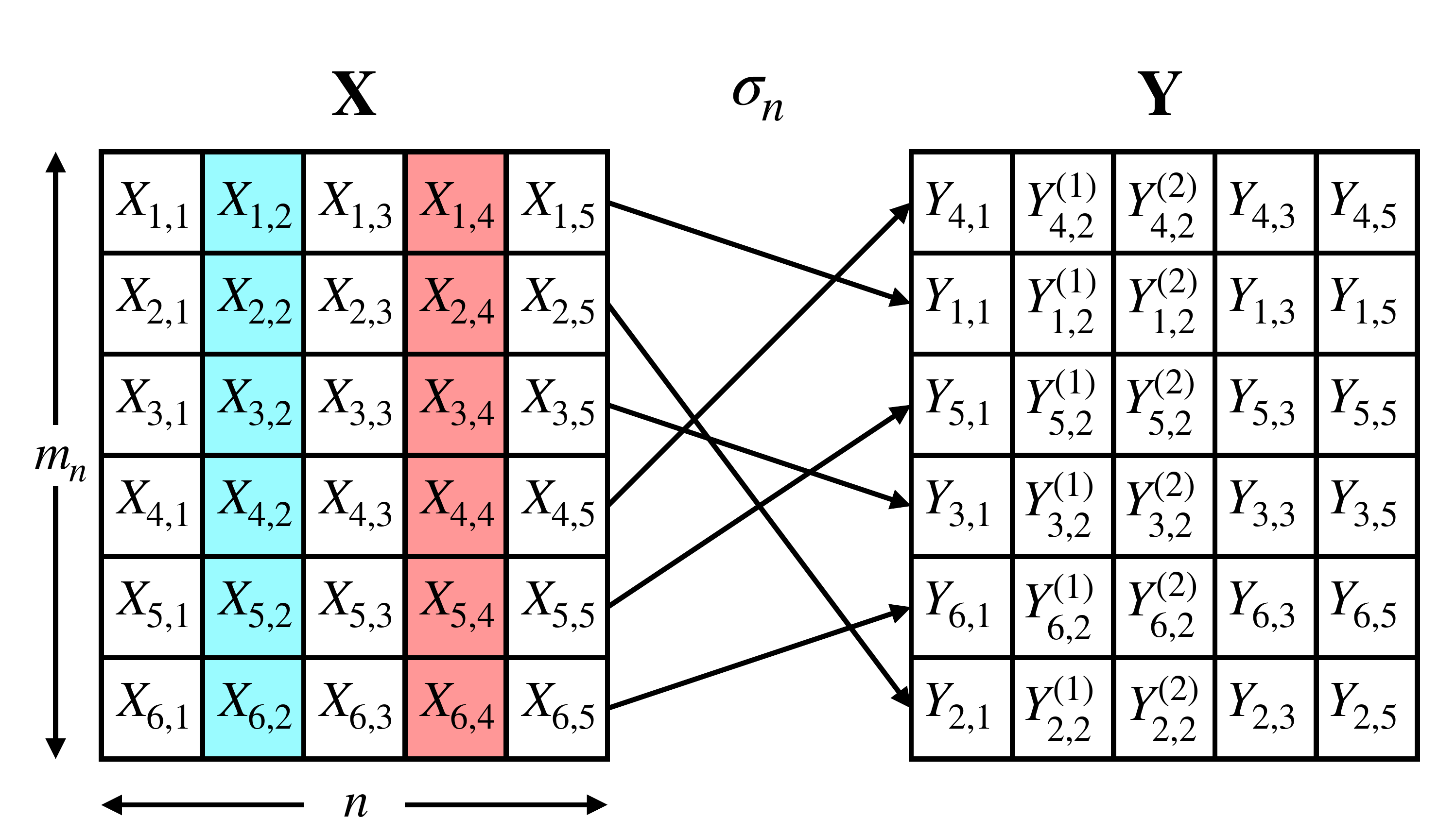}}
\caption{An illustrative example of database matching under column repetitions. The column colored in red is deleted, whereas the column colored in blue is replicated. $Y_{i,2}^{(1)}$ and $Y_{i,2}^{(2)}$ denote noisy copies/replicas of $X_{i,2}$. The goal of database de-anonymization studied in this paper is to estimate the correct row permutation ${\sigma_n=\left(\begin{smallmatrix} 
1 & 2 & 3 & 4 & 5 & 6\\
2 & 6 & 4 & 1 & 3 & 5
\end{smallmatrix}\right)}$, by matching the rows of $\mathbf{X}$ and $\mathbf{Y}$ without \emph{any prior information} on the underlying database ($p_{X}$), obfuscation ($p_{Y|X}$), and repetition ($p_S$) distributions.}
\label{fig:intro}
\end{figure}

Motivated by most practical settings where the underlying distributions are not readily available, in this paper, we investigate the de-anonymization problem \emph{without any prior knowledge of the underlying distributions}. Borrowing a noisy random column repetition model from~\cite{bakirtas2022seeded}, illustrated in Figure~\ref{fig:intro}, we show that even though the distributions are not known a priori, matching can be performed with no asymptotic (in database size) loss of performance. To that end, we first modify the noisy replica detection algorithm proposed in~\cite{bakirtas2022seeded} so that it still works in the distribution-agnostic setting. Then, we propose a novel outlier-detection-based deletion detection algorithm and show that when seeds, whose size grows double logarithmic with the number of users (rows), are available, the underlying deletion pattern could be inferred. Next, through a joint-typicality-based de-anonymization algorithm that relies on the estimated distributions and the repetition pattern, we derive a tight sufficient condition for database de-anonymization to succeed. Finally, we evaluate the performance of our proposed distribution-agnostic algorithms through simulations on finite databases and demonstrate their success in the non-asymptotic regime as well.

The structure of the rest of this paper is as follows: Section~\ref{sec:problemformulation} introduces the formal statement of the problem. Section~\ref{sec:mainresult} contains our proposed algorithms, states our main result, and contains its proof. In Section~\ref{sec:experiments} we evaluate the performances of our proposed algorithms in the non-asymptotic regime via simulations. Section~\ref{sec:conclusion} consists of the concluding remarks. Proofs are provided in the Appendix.

\noindent{\em Notation:} We denote a matrix $\mathbf{X}$ with bold capital letters, and its $(i,j)$\textsuperscript{th} element with $X_{i,j}$. A set is denoted by a calligraphic letter, \emph{e.g.}, $\mathcal{X}$. $[n]$ denotes the set of integers $\{1,\dots,n\}$. Asymptotic order relations are used as defined in~\cite[Chapter 3]{cormen2022introduction}. Unless stated otherwise, all logarithms are base 2. $H(.)$ and $I(.;.)$ denote the Shannon entropy and the mutual information~\cite[Chapter 2]{cover2006elements}, respectively. $D(p\|q)$ denotes the relative entropy~\cite[Chapter 2.3]{cover2006elements} (in bits) between two Bernoulli distributions with respective parameters $p$ and $q$.  $\overset{p}{\to}$ denotes convergence in probability.
\newpage
\section{Problem Formulation}
\label{sec:problemformulation}

We use the following definitions, most of which are adopted from~\cite{bakirtas2023database} to formalize our problem.

\begin{defn}{\textbf{(Anonymized Database)}}\label{defn:unlabeleddb}
An ${(m_n,n,p_{X})}$ \emph{anonymized database} $\mathbf{X}=\{X_{i,j}\in\mathcal{X}\}, (i,j)\in[m_n]\times [n]$ is a randomly generated ${m_n\times n}$ matrix with $X_{i,j}\overset{\text{i.i.d.}}{\sim} p_X$, where $p_X$ has a finite discrete support $\mathcal{X}=\{1,\dots,|\mathcal{X}|\}$.
\end{defn}

\begin{defn}{\textbf{(Column Repetition Pattern)}}\label{defn:repetitionpattern}
The \emph{column repetition pattern} $S^n=\{S_1,S_2,...,S_n\}$ is a random vector with $\displaystyle S_i\overset{\text{i.i.d.}}{\sim} p_S$, where the \emph{repetition distribution} $p_S$ has a finite integer support ${\{0,\dots,s_{\max}\}}$. Here $\delta\triangleq p_S(0)$ is called the \emph{deletion probability}.
\end{defn}

\begin{defn}{\textbf{(Anonymization Function)}}
    The \emph{anonymization function} $\sigma_n$ is a uniformly-drawn permutation of $[m_n]$.
\end{defn}

\begin{defn}{\textbf{(Obfuscation Distribution)}}
    The \emph{obfuscation} (also referred to as \emph{noise}) \emph{distribution} $p_{Y|X}$ is a conditional probability distribution with both $X$ and $Y$ taking values from $\mathcal{X}$.
\end{defn}

\begin{figure}[t]
\centerline{\includegraphics[width=0.75\textwidth,trim={0cm 25cm 0cm 0cm},clip]{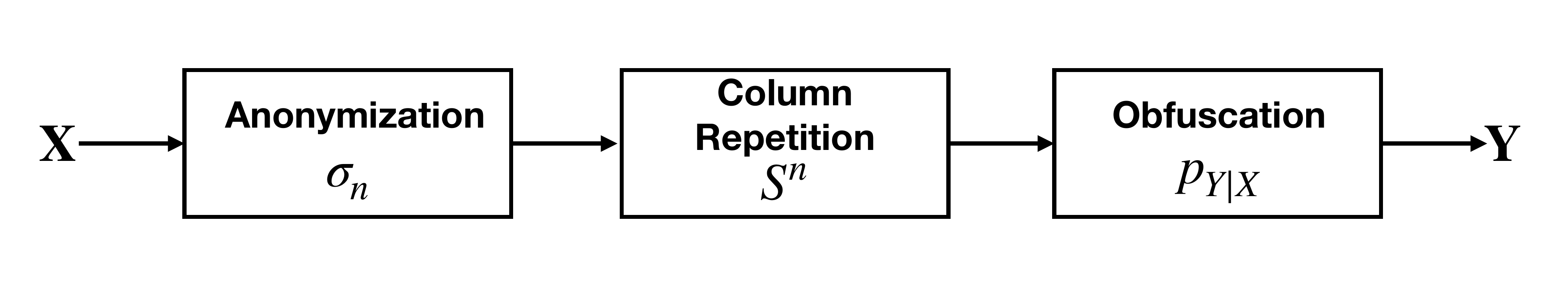}}
\caption{Relation between the anonymized database $\mathbf{X}$ and the labeled correlated database, $\mathbf{Y}$.}
\label{fig:systemmodel}
\end{figure}

\begin{defn}{\textbf{(Labeled Correlated Database)}}\label{defn:labeleddb}
Let $\mathbf{X}$ and $S^n$ be a mutually-independent ${(m_n,n,p_{X})}$ anonymized database and repetition pattern, $\sigma_n$ be an anonymization function, and $p_{Y|X}$ be an obfuscation distribution. Then, $\mathbf{Y}$ is called the \emph{labeled correlated database} if the $i$\textsuperscript{th} row $X^n_{i}$ of $\mathbf{X}$ and the $\sigma_n(i)$\textsuperscript{th} row $Y^{K_n}_{\sigma_n(i)}=[Y_{\sigma_n(i),1},\dots,Y_{\sigma_n(i),K_{n}}]$ of $\mathbf{Y}$ have the following relation      \begin{align}
    \Pr(Y^{K_{n}}_{\sigma_n(i)} = y^{K_n}|X_i^n=x^n)
    &= \prod\limits_{j:S_j\neq 0} \Pr(Y_{\sigma_n(i),K_{j-1}+1},\dots,Y_{\sigma_n(i),K_{j}}=y_{K_{j-1}+1},\dots,y_{K_{j}}|X_{i,j}=x_j)\label{eq:correlateddbident1}\\
    &= \prod\limits_{j:S_j\neq 0} \prod\limits_{s=1}^{S_{j}} p_{Y|X}(y_{K_{j-1}+s}|x_j),\label{eq:correlateddbident2}
\end{align}
where
\begin{align}
    K_{j}&\triangleq \sum\limits_{t=1}^j S_{t}.
\end{align}

Note that $S_j$ indicates the times the $j$\textsuperscript{th} column of $\mathbf{X}$ is repeated in $\mathbf{Y}$. When $S_j=0$, the $j$\textsuperscript{th} column of $\mathbf{X}$ is said to be \emph{deleted} and when $S_j>1$, the $j$\textsuperscript{th} column of $\mathbf{X}$ is said to be \emph{replicated}.

The $i$\textsuperscript{th} row $X_i$ of $\mathbf{X}$ and the $\sigma_n(i)$\textsuperscript{th} row $Y_{\sigma_n(i)}$ of $\mathbf{Y}$ are called \emph{matching rows}.
\end{defn}

The relationship between $\mathbf{X}$ and $\mathbf{Y}$, as described in Definition~\ref{defn:labeleddb}, is illustrated in Figure~\ref{fig:systemmodel}.

As often done in both the graph matching~\cite{shirani2021concentration} and the database matching~\cite{bakirtas2022seeded} literatures, we will assume the availability of a set of already-matched row pairs called \emph{seeds}, to be used in the detection of the underlying repetition pattern.

\begin{defn}{\textbf{(Seeds)}}
    Given a pair of anonymized and labeled correlated databases $(\mathbf{X},\mathbf{Y})$, a \emph{seed} is a correctly matched row pair with the same underlying repetition pattern $S^n$. A \emph{batch of $\Lambda_n$ seeds} is a pair of seed matrices of respective sizes $\Lambda_n\times n$ and $\Lambda_n\times \sum_{j=1}^n S_j$.
\end{defn}
For the sake of notational brevity and without loss of generality, we assume that the seed matrices $\mathbf{G}^{(1)}$ and $\mathbf{G}^{(2)}$ are not submatrices of $\mathbf{X}$ and $\mathbf{Y}$. Throughout, we will assume a seed size $\Lambda_n=\omega(\log n)=\omega(\log\log m_n)$ which is double-logarithmic with the number of users $m_n$.

Note that as the number of rows $m_n$ increases for a fixed number of columns $n$, so does the probability of mismatch due to the increased number of candidates. In turn, in de-anonymization (\emph{a.k.a.} matching) problems, the relationship between the row size $m_n$ and the column size $n$ directly impacts the de-anonymization performance. Hence, as done in~\cite{shirani8849392,bakirtas2021database,bakirtas2022database,bakirtas2022matching,bakirtas2022seeded}, we utilize the database growth rate, defined below, as the main performance metric.
\begin{defn}\label{defn:dbgrowthrate}{\textbf{(Database Growth Rate)}}
The \emph{database growth rate} $R$ of an $(m_n,n,p_X)$ anonymized database is defined as 
\begin{align}
    R&=\lim\limits_{n\to\infty} \frac{1}{n}\log m_n
\end{align}
\end{defn}

\begin{defn}{\textbf{(Distribution-Agnostically Achievable Database Growth Rate)}}\label{defn:achievable}
Consider a sequence of ${(m_n,n,p_X)}$ anonymized databases $\mathbf{X}$ with database growth rate $R$, an independent a repetition probability distribution $p_S$, an obfuscation distribution $p_{Y|X}$ and the resulting sequence of labeled correlated databases $\mathbf{Y}$. Given seed matrices $(\mathbf{G}^{(1)},\mathbf{G}^{(2)})$ with a seed size $\Lambda_n$, the database growth rate $R$ is said to be \emph{distribution-agnostically achievable} if there exists a successful matching scheme $\psi:(\mathbf{X},\mathbf{Y},\mathbf{G}^{(1)},\mathbf{G}^{(2)})\mapsto \hat{\sigma}_n$ \emph{that does not rely on any knowledge on $p_X$, $p_S$ and $p_{Y|X}$}, with
\begin{align}
\lim\limits_{n\to\infty}\Pr\left(\sigma_n(I)\neq\hat{\sigma}_n(I)\right)&= 0 \text{ where } I\sim\text{Unif}([m_n])\label{eq:errordefiniton}
\end{align}
where $\sigma_n$ is true the anonymization function.
\end{defn}

\begin{rem}
We note that the performance criterion described in Definition~\ref{defn:achievable} is known as \emph{almost-perfect recovery}.
Another well-known performance criterion is the \emph{perfect recovery} criterion where \emph{all} rows need to be matched. More formally, the perfect recovery criterion corresponds to
\begin{align}
    \lim\limits_{n\to\infty} \Pr(\sigma_n\neq \hat{\sigma}_n) = 0
\end{align}
Observe that the difference between the two performance criteria is that in the almost-perfect criterion allows a sublinear fraction of the database rows to be mismatched. This mismatch budget allows us to use tools such as typicality from information and communication theories. We refer to~\cite{kunisky2022strong} for an in-depth analytical comparison of the Gaussian database matching results under these different performance criteria.
\end{rem}

\begin{defn}{\textbf{(Distribution-Agnostic Matching Capacity)}}\label{defn:matchingcapacity}
     The \emph{distribution-agnostic matching capacity} $C$ is the supremum of the set of all distribution-agnostically achievable database growth rates corresponding to a database distribution $p_X$, repetition distribution $p_S$, obfuscation distribution $p_{Y|X}$, and seed size $\Lambda_n$.
\end{defn}

In this work, our main objective is the characterization of the \emph{distribution-agnostic} matching capacity $C$. We remark that this is in contrast with prior work (\cite{shirani8849392,bakirtas2021database,bakirtas2022seeded,bakirtas2022matching}) that assumed complete distributional knowledge.

Throughout, since we are interested in the distribution-agnostic matching capacity, we assume a positive database growth rate $R>0$. In other words, we assume $n\sim\log m_n$.

Our analysis hinges on the following assumptions:
\begin{rem}{\textbf{(Assumptions)}}\label{rem:assumptions}
\begin{enumerate}[label=\textbf{(\alph*)}]
    \item The anonymized database $\mathbf{X}$ and the column repetition pattern $S^n$ are known to be independent. It is also known that their components $\smash{X_{i,j}}$ and $\smash{S_i}$ are \emph{i.i.d.}, while the distributions $p_X$ and $p_S$ are not known.
    \item The conditional independence of the noisy replicas stated in \eqref{eq:correlateddbident2} is known, leading to a memoryless obfuscation model, whereas the noise distribution $p_{Y|X}$ is not.
    \item $|\mathcal{X}|$ and $s_{\max}$ are known.
\end{enumerate}
\end{rem}

As we argue in Sections~\ref{subsec:replicadetection} and \ref{subsec:deletiondetection}, $S^n$ can be detected without any assumptions on the correlation between $\mathbf{X}$ and $S^n$. In turn, one could test the independence of $\mathbf{X}$ and the estimate $\hat{S}^n$ of $S^n$. Furthermore, the \emph{i.i.d.} nature of the components of $\mathbf{X}$ and $\hat{S}^n$, and the obfuscation can be inferred via the Markov order estimation algorithm of~\cite{morvai2005order} with a probability of error vanishing in $n$, justifying \textbf{(a)}-\textbf{(b)}. Similarly, since $|\mathcal{X}|$ and $s_{\max}$ do not depend on $n$, they can easily be estimated with a vanishing probability of error, justifying \textbf{(c)}.

In our derivations, we rely on some well-known classical information-theoretic definitions and results which we present below for the sake of completeness.
\begin{defn}{\textbf{(Joint Entropy~\cite[Chapter 2.2]{cover2006elements})}}
    The joint (Shannon) entropy associated with a pair of discrete random variables $(X,Y)$ with a joint distribution $p_{X,Y}$ is defined as:
    \begin{align}
        H(X,Y) &\triangleq -\mathbb{E}[\log p_{X,Y}(X,Y)]
    \end{align}
\end{defn}

\begin{defn}{\textbf{(Joint Typicality~\cite[Chapter 7.6]{cover2006elements})}}
\label{defn:jointtypicality}
The $\epsilon$-typical set $A_\epsilon^{(n)}(X,Y)$ associated with discrete random variables $(X,Y)$ with joint distribution $p_{X,Y}$ is the set of all sequence pairs $(x^n,y^n)$ satisfying
\begin{align}
    \Big|-\frac{1}{n}&\log p_{X^n,Y^n}(x^n,y^n)-H(X,Y)\Big|\le \epsilon
\end{align}
where $H(\mathcal{X},\mathcal{Y})$ is the entropy rate of $(\mathcal{X},\mathcal{Y})$.
\end{defn}

\begin{prop}{\textbf{(Joint AEP~\cite[Theorem 7.6.1]{cover2006elements})}}
\label{prop:jointaep}
Let $\Tilde{X}^n$ and $\Tilde{Y}^n$ be generated according to the \emph{i.i.d.} marginal distributions $p_{X^n}$ and $p_{Y^n}$, independently. Then, the following holds:
    \begin{align}
        \Pr((\Tilde{X}^n,\Tilde{Y}^n)\in A_\epsilon^{(n)}(X,Y))&\le  2^{-n(I(X;Y)-3\epsilon)}
    \end{align}
    where $I(X;Y)\triangleq H(X)+H(Y)-H(X,Y)$ is the mutual information.   
\end{prop}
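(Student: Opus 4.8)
The plan is to reproduce the standard counting argument behind the joint AEP. First I would control the size of the typical set. Every pair $(x^n,y^n)\in A_\epsilon^{(n)}(X,Y)$ satisfies $p_{X^n,Y^n}(x^n,y^n)\ge 2^{-n(H(X,Y)+\epsilon)}$ by the defining inequality of Definition~\ref{defn:jointtypicality}, and these joint probabilities sum to at most one over the whole set; hence $|A_\epsilon^{(n)}(X,Y)|\le 2^{n(H(X,Y)+\epsilon)}$. This is the usual one-line cardinality bound and uses only the lower tail of the joint-typicality condition.

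Next I would evaluate the left-hand side under the product measure. Since $\Tilde{X}^n$ and $\Tilde{Y}^n$ are drawn independently from $p_{X^n}$ and $p_{Y^n}$, the probability factorizes termwise,
\[
\Pr\big((\Tilde{X}^n,\Tilde{Y}^n)\in A_\epsilon^{(n)}(X,Y)\big)=\sum_{(x^n,y^n)\in A_\epsilon^{(n)}(X,Y)} p_{X^n}(x^n)\,p_{Y^n}(y^n).
\]
Bounding each marginal factor by $p_{X^n}(x^n)\le 2^{-n(H(X)-\epsilon)}$ and $p_{Y^n}(y^n)\le 2^{-n(H(Y)-\epsilon)}$ and inserting the cardinality bound gives
\[
\Pr\big((\Tilde{X}^n,\Tilde{Y}^n)\in A_\epsilon^{(n)}(X,Y)\big)\le 2^{n(H(X,Y)+\epsilon)}\,2^{-n(H(X)-\epsilon)}\,2^{-n(H(Y)-\epsilon)}=2^{-n(H(X)+H(Y)-H(X,Y)-3\epsilon)},
\]
and substituting $I(X;Y)=H(X)+H(Y)-H(X,Y)$ produces exactly $2^{-n(I(X;Y)-3\epsilon)}$, as claimed.

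The step that deserves care---and the one I expect to be the only real obstacle---is bounding the two marginal factors, since $p_{X^n}(x^n)\le 2^{-n(H(X)-\epsilon)}$ and its $Y$-counterpart do not follow from the joint condition alone: they require $x^n$ and $y^n$ to be individually typical. I would therefore use the full jointly typical set of~\cite[Chapter 7.6]{cover2006elements}, which augments the joint condition of Definition~\ref{defn:jointtypicality} with the marginal conditions $|-\tfrac1n\log p_{X^n}(x^n)-H(X)|\le\epsilon$ and $|-\tfrac1n\log p_{Y^n}(y^n)-H(Y)|\le\epsilon$; these are precisely the inequalities yielding the two marginal bounds above. Everything else is arithmetic in the exponents, so this is the sole place where the structure of the typical set, rather than mere counting, enters.
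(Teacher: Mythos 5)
Your proof is correct and is precisely the standard counting argument behind~\cite[Theorem 7.6.1]{cover2006elements}, which the paper cites rather than reproves, so you and the paper take the same (textbook) route. Your closing caveat is also well placed: the paper's Definition~\ref{defn:jointtypicality} states only the joint-entropy condition, so the marginal bounds $p_{X^n}(x^n)\le 2^{-n(H(X)-\epsilon)}$ and $p_{Y^n}(y^n)\le 2^{-n(H(Y)-\epsilon)}$ do require the full jointly typical set of~\cite[Chapter 7.6]{cover2006elements} with its two marginal conditions, exactly as you invoke.
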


\newpage
\section{Main Result}
\label{sec:mainresult}
Our main result in Theorem~\ref{thm:mainresult} is on the distribution-agnostically achievable database growth rates when no prior information is provided on distributions $p_{X}$, $p_{Y|X}$, and $p_S$.
\begin{thm}{\textbf{(Distribution-Agnostic Matching Capacity)}}\label{thm:mainresult}
    Consider an anonymized and labeled correlated database pair. We assume that the underlying database distribution $p_X$, the obfuscation function $p_{Y|X}$, and the column repetition distribution $p_S$ are unknown. Given a seed size $\Lambda_n=\omega(\log n)$, the distribution-agnostic matching capacity is given by
    \begin{align}
        C&=I(X;Y^S|S) \label{eq:mainresult}
    \end{align}
    where $S\sim p_S$, $X\sim p_X$ and ${Y^S=Y_1,\dots,Y_S}$ such that $Y_i|X\overset{\text{i.i.d.}}{\sim} p_{Y|X}$.
\end{thm}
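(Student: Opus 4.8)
The plan is to establish matching upper and lower bounds on $C$. The converse is essentially inherited: any scheme $\psi$ achieving rate $R$ \emph{without} using $p_X,p_S,p_{Y|X}$ is a fortiori a valid scheme when those distributions happen to be known, so the set of distribution-agnostically achievable rates is contained in the set of distribution-aware achievable rates. Since the latter supremum was already shown to equal $I(X;Y^S|S)$ in the distribution-aware analysis of~\cite{bakirtas2023database}, I immediately get $C\le I(X;Y^S|S)$. All the real work is therefore in achievability: I must exhibit, for every $R<I(X;Y^S|S)$, a scheme that makes the per-row error in~\eqref{eq:errordefiniton} vanish while never invoking the true distributions.

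For achievability I would split the scheme into three stages and bound the failure probability of each. Stage one reconstructs the pattern $\hat S^n$. A \emph{replica-detection} step first partitions the $K_n$ columns of $\mathbf{Y}$ into consecutive blocks, one per surviving source column, by exploiting the structural fact that replicas of a common symbol $X_{i,j}$ are statistically dependent while columns originating from distinct (hence independent) source columns are independent; I would replace the distribution-dependent test of~\cite{bakirtas2022seeded} with an empirical test of inter-column dependence computed across the $m_n$ rows of $\mathbf{Y}$, which is consistent because $m_n$ is exponentially large in $n$. A \emph{seeded deletion-detection} step then locates the deleted columns. Because the seeds carry the same pattern $S^n$, I can detect the block structure of $\mathbf{G}^{(2)}$ and scan it, in order, against the $n$ columns of $\mathbf{G}^{(1)}$, flagging as an outlier any original column that admits no order-consistent matching block. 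The quantitative heart of this step is the claim that each per-column decision, being based on $\Lambda_n$ i.i.d. seed samples, errs with probability $2^{-\Omega(\Lambda_n)}$; a union bound over the $n$ columns then gives total detection error $n\,2^{-\Omega(\Lambda_n)}\to 0$ exactly when $\Lambda_n=\omega(\log n)$, which, via $n\sim\log m_n$, is the advertised double-logarithmic seed size.

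Conditioned on correct detection ($\hat S^n=S^n$), stage two estimates the quantities the matcher needs. I would estimate $p_X$ from the $m_n n$ entries of $\mathbf{X}$, $p_S$ from the $n$ entries of $\hat S^n$, and the effective per-column channel $p_{Y^S|X,S}$ from the matched $(X,\text{block})$ pairs revealed by the aligned seeds; since there are $\Theta(\Lambda_n n)=\omega(n\log n)$ such pairs over a fixed alphabet, the empirical channel converges and all estimation errors vanish in $n$. Stage three declares $\hat\sigma_n(i)$ to be the unique row of $\mathbf{Y}$ that is jointly $\epsilon$-typical with $X_i^n$ under the estimated distributions and $\hat S^n$, the effective channel factorizing over columns given $S$. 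The true row is typical with high probability by the AEP, whereas the probability that a fixed wrong row is typical is at most $2^{-n(I(X;Y^S|S)-3\epsilon)}$ by Proposition~\ref{prop:jointaep} applied conditionally on $\hat S^n$. Writing $m_n-1=2^{n(R+o(1))}$ and union-bounding over the wrong candidates bounds the false-match probability by $2^{-n(I(X;Y^S|S)-R-3\epsilon-o(1))}$, which vanishes for every $R<I(X;Y^S|S)$ after letting $\epsilon\to 0$.

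I expect two steps to carry the difficulty. The first and most delicate is the deletion-detection analysis: constructing an outlier statistic whose false-deletion and missed-deletion probabilities are \emph{simultaneously} exponentially small in $\Lambda_n$, uniformly over the unknown $p_X$ and $p_{Y|X}$, so that the union bound over $n$ columns still closes at $\Lambda_n=\omega(\log n)$. The second is a robustness argument certifying that substituting estimates $\hat p$ for the true distributions costs nothing asymptotically; this should follow from the continuity of $H(\cdot)$ and $I(\cdot;\cdot)$ together with the vanishing estimation errors of stage two, but it requires care that the resulting $o(1)$ typicality slack does not interact badly with the exponentially many candidates in the union bound.
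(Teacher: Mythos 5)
Your architecture matches the paper's --- converse inherited from the distribution-aware result of~\cite{bakirtas2023database}; achievability via repetition detection, seed-based distribution estimation, and joint typicality under $\hat{p}_{X,Y^S|S}$ with the same union-bound and continuity bookkeeping --- but the step you yourself flag as the heart, seeded deletion detection, is asserted rather than solved, and as stated it fails. You claim each per-column decision errs with probability $2^{-\Omega(\Lambda_n)}$ uniformly over the unknown $p_X$ and $p_{Y|X}$. Two obstacles block this. First, the natural statistic, the Hamming distance $L_{i,j}$ between seed columns, is $\text{Binom}(\Lambda_n,q_0)$ or $\text{Binom}(\Lambda_n,q_1)$ depending on whether the columns are independent, and there exist dependent pairs $(p_X,p_{Y|X})$ with $q_0=q_1$ (e.g., uniform $p_X$ with a unit-trace transition matrix), so no test based on $L_{i,j}$ can distinguish matched from unmatched columns at all. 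The paper handles this by sweeping bijective remappings $\Phi\in\mathcal{S}(\mathcal{X})$ applied to $\mathbf{G}^{(2)}$ --- a useful $\Phi$ with $q_0(\Phi)\neq q_1(\Phi)$ exists whenever $p_{X,Y}\neq p_X p_Y$ by~\cite[Lemma 2]{bakirtas2022seeded} --- and, crucially, uses the absence of outliers in a column of $\mathbf{L}(\Phi)$ to certify that the current $\Phi$ is useless and move to the next one. Your ``order-consistent matching block'' scan never defines the test statistic, so nothing in your argument rules out this degenerate case.

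Second, even given a separating statistic, a $2^{-\Omega(\Lambda_n)}$ bound requires a threshold placed a constant fraction inside the gap $|q_0-q_1|$, which you cannot set agnostically: the gap is unknown, and --- unlike replica detection, where the mixture weight $\alpha$ converges to the interior value $(1-\delta)/\mathbb{E}[S]$ so Blischke's moment estimator applies --- here the $\text{Binom}(\Lambda_n,q_1)$ component appears in only one of the $n$ entries of each column of $\mathbf{L}$, i.e., the mixture weight is $1-\frac{1}{n}\to 1$, so parameter estimation fails. The paper's resolution is a distribution-free threshold $\hat{\tau}_n=2\Lambda_n^{2/3}(\log n)^{1/3}=o(\Lambda_n)$ applied to the absolute deviations $M_{i,j}(\Phi)$ from the global sample mean $\mu(\Phi)$, which concentrates at $\Lambda_n q_0(\Phi)$ precisely because all but a $\frac{1}{n}$ fraction of entries are null; a second-order expansion of the binomial divergence then gives a per-entry error exponent of order $\Gamma_n^{1/3}\log n=\omega(\log n)$ when $\Lambda_n=\Gamma_n\log n$ --- weaker than your claimed $\Omega(\Lambda_n)$, but still enough to close the union bound over the at most $n\tilde{K}_n\le n^2$ entries. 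Note also that uniformity over all $(p_X,p_{Y|X})$ is too strong to hope for, since $|q_0(\Phi)-q_1(\Phi)|$ can be arbitrarily small; what the theorem needs, and what the paper proves, is vanishing error for each fixed unknown distribution. The remaining stages of your proposal (consistency of the estimates via the law of large numbers plus continuity of $H$ and $I$, and the $2^{-n(I(X;Y^S|S)-R-O(\epsilon))}$ false-match bound) agree with the paper's proof essentially verbatim.
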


Theorem~\ref{thm:mainresult} implies that given a seed size ${\Lambda_n=\omega(\log n)=\omega(\log\log m_n)}$ we can perform matching as if we knew the underlying distributions $p_X$, $p_{Y|X}$ and $p_S$, and the actual column repetition pattern $S^n$ a priori. Hence comparing Theorem~\ref{thm:mainresult} with~\cite[Theorem 1]{bakirtas2023database}, we conclude that in the asymptotic regime where $n\to\infty$, not knowing the distributions and the realizations of the repetition pattern causes no loss in the matching capacity.

We stress that the proof of the converse part of Theorem~\ref{thm:mainresult} directly follows from ~\cite[Theorem~1]{bakirtas2023database} since distribution-agnostic matching capacity cannot be better than the one that assumes distributional information. The rest of this section is on the proof of the achievability part of Theorem~\ref{thm:mainresult}. In Section~\ref{subsec:replicadetection}, we present an algorithm for detecting noisy replicas and prove its asymptotic success. Then in Section~\ref{subsec:deletiondetection}, we propose a seeded deletion detection algorithm and derive a sufficient seed size that guarantees its asymptotic success. Subsequently, in Section~\ref{subsec:achievability}, we present our de-anonymization algorithm that incorporates Sections \ref{subsec:replicadetection} and \ref{subsec:deletiondetection}. Finally, in Section~\ref{subsec:noiseless}, we focus on the special case of no obfuscation, where seeds are obsolete. 
\subsection{Distribution-Agnostic Noisy Replica Detection}
\label{subsec:replicadetection}
Similar to~\cite{bakirtas2022seeded}, we use the running Hamming distances between the consecutive columns ${\displaystyle C_j^{(2)}}$ and $\displaystyle C_{j+1}^{(2)}$ of $\mathbf{Y}$, denoted by $W_j$, $j\in[K_n-1]$, where $K_n\triangleq\sum_{j=1}^n S_j$ as a permutation-invariant future of the labeled correlated database. More formally,
\begin{align}
    W_j & \triangleq \sum\limits_{t=1}^{m_n} \mathbbm{1}_{[Y_{t,j+1}\neq Y_{t,j}]},\hspace{2em} \forall j\in[K_n-1]\label{eq:RHD}
    \end{align}
We first note that 
\begin{align}
W_j&\sim \begin{cases}
        \text{Binom}(m_n,p_0),& \text{if }C^{(2)}_j \indep C^{(2)}_{j+1} \\
        \text{Binom}(m_n,p_1), & \text{otherwise}
    \end{cases}
    \label{eq:binomialmixture}
\end{align}
where 
\begin{align}
    p_0 &\triangleq 1-\sum\limits_{y\in\mathcal{X}} p_Y(y)^2 \label{eq:p0}\\
    p_1 &\triangleq 1-\sum\limits_{x\in\mathcal{X}} p_X(x) \sum\limits_{y\in\mathcal{X}} p_{Y|X}(y|x)^2\label{eq:p1}
\end{align}
From~\cite[Lemma 1]{bakirtas2022database}, we know that as long as the databases are dependent, \emph{i.e.,} $p_{X,Y}\neq p_X p_Y$, we have $p_0>p_1$ for any $p_{X,Y}$ suggesting that replicas can be potentially detected based on $W_j$ similar to~\cite{bakirtas2022seeded,bakirtas2022database}. However, the algorithm in~\cite{bakirtas2022seeded} relies on a threshold test with the threshold depending on $p_{X,Y}$ through $p_0$ and $p_1$. In Algorithm~\ref{alg:noisyreplicadetection}, we propose the following modification for the distribution-agnostic setting: We first construct the estimates $\hat{p}_0$ and $\hat{p}_1$ for the respective parameters $p_0$ and $p_1$ through the moment estimator proposed by Blischke~\cite{blischke1962moment} and then use the algorithm proposed in \cite[Section III-A]{bakirtas2022seeded} for replica detection. Note that we can use Blischke's estimator because the Binomial mixture is guaranteed to have two distinct components. More formally, the distribution of $W_j$ conditioned on $S^n$ is given by
    \begin{align}
    \Pr(W_j=w|S^n) &= \binom{m_n}{w} [\alpha p_0^{w} (1-p_0)^{m_n-w}+(1-\alpha) p_1^w (1-p_1)^{m_n-w}]
\end{align}
for $w=0,\dots,m_n$ where the mixing parameter $\alpha$ is given by
\begin{align}
    \alpha &= \frac{1}{K_n-1}\left(n-\sum\limits_{j=1}^n \mathbbm{1}_{[S_j=0]}\right)
\end{align}
Since $p_S$, and in turn $\delta$ are constant in $n$, it can easily be verified that as $n\to\infty$, $\alpha\overset{p}{\to} \frac{1-\delta}{\mathbb{E}[S]}$. Hence $\alpha$ is bounded away from both 0 and 1, suggesting that the moment estimator of~\cite{blischke1962moment} and in turn, Algorithm~\ref{alg:noisyreplicadetection} can be used to detect the replicas.

\begin{algorithm}[t]
\caption{Distribution-Agnostic Noisy Replica Detection Algorithm}\label{alg:noisyreplicadetection}
\Input{$(\mathbf{Y},m_n,K_n)$}
\Output{isReplica}
$W\gets $ RunningHammingDist($\mathbf{Y}$)\Comment*[r]{Eq.~\eqref{eq:RHD}}
$(\hat{p}_0,\hat{p}_1) \gets $EstimateParams$(W)$\Comment*[r]{Eq~\eqref{eq:paramest1}-\eqref{eq:paramest2}}
$\tau \gets \frac{\hat{p}_0+\hat{p}_1}{2}$\Comment*[r]{Threshold}
isReplica $\gets \varnothing$\;

\For{$j = 1$ \KwTo $K_n-1$}{
  \eIf{$W[j]\le m_n \tau$}{
    isReplica$[j] \gets$ TRUE\;
  }{
      isReplica$[j] \gets $ FALSE\;
    }
  
}
\end{algorithm}

\begin{lem}{\textbf{(Noisy Replica Detection)}}\label{lem: replica detection}
   Let $E_j$ denote the event that Algorithm~\ref{alg:noisyreplicadetection} fails to infer the correct relationship between $C^{(2)}_{j}$ and $C^{(2)}_{j+1}$, $j=1,\dots,K_n-1$. Then, given $m_n=\omega(\log n)$
\begin{align}
    \kappa^{(1)}_n&\triangleq \Pr(\bigcup\limits_{j=1}^{K-1} E_j)\to 0\text{ as }n\to\infty.
\end{align}
\end{lem}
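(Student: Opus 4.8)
The plan is to show that the data-driven threshold $\hat\tau=\tfrac{\hat p_0+\hat p_1}{2}$ behaves, with high probability, like the ideal midpoint $\tau^\star\triangleq\tfrac{p_0+p_1}{2}$, and that with such a threshold the two Binomial populations in the mixture \eqref{eq:binomialmixture} are separated by a margin that makes misclassification exponentially unlikely in $m_n$. Throughout I condition on the repetition pattern $S^n$, which fixes, for each $j$, whether $W_j\sim\text{Binom}(m_n,p_0)$ (non-replica) or $W_j\sim\text{Binom}(m_n,p_1)$ (replica); all bounds below are uniform in $S^n$, so the conditioning can be removed at the end. The key structural fact I rely on is $p_0>p_1$ from~\cite[Lemma 1]{bakirtas2022database}, which guarantees a strictly positive gap $\Delta\triangleq p_0-p_1$ that does not depend on $n$.

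First I would isolate the estimation error. Define the good event $\mathcal{G}\triangleq\{|\hat p_0-p_0|\le \Delta/4\}\cap\{|\hat p_1-p_1|\le\Delta/4\}$, on which $|\hat\tau-\tau^\star|\le\Delta/4$. The number of samples feeding Blischke's moment estimator is $K_n-1=\Theta(n)\to\infty$, the two mixture components are distinct ($p_0\neq p_1$), and the mixing weight satisfies $\alpha\overset{p}{\to}(1-\delta)/\mathbb{E}[S]\in(0,1)$; under these identifiability conditions the method-of-moments estimator is consistent, so $\Pr(\mathcal{G}^c)\to 0$. The point requiring care here is that $W_1,\dots,W_{K_n-1}$ are not i.i.d.: $W_j$ and $W_{j+1}$ share the column $C^{(2)}_{j+1}$ of $\mathbf{Y}$, while $W_j$ and $W_{j'}$ are independent (given $S^n$) whenever $|j-j'|\ge 2$. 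This $1$-dependence is benign, since the empirical moments of $W_j/m_n$ on which Blischke's estimator is built still converge in probability to their expectations (their variances are $O(1/K_n)$), so consistency survives and a single term $\Pr(\mathcal{G}^c)\to 0$ will enter the final union bound.

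Next I would bound the per-column error on $\mathcal{G}$. On $\mathcal{G}$ the threshold $m_n\hat\tau$ lies in $[m_n(\tau^\star-\Delta/4),\,m_n(\tau^\star+\Delta/4)]$. For a non-replica column, $E_j\cap\mathcal{G}$ forces $W_j\le m_n\hat\tau\le m_n(p_0-\Delta/4)$; for a replica column it forces $W_j> m_n\hat\tau\ge m_n(p_1+\Delta/4)$. In either case $W_j$ must deviate from its Binomial mean by at least $m_n\Delta/4$, and because each of these events is a statement about the marginal law of the single variable $W_j$, the randomness of the threshold causes no difficulty. Hoeffding's inequality then gives $\Pr(E_j\cap\mathcal{G}\mid S^n)\le \exp(-m_n\Delta^2/8)$ for every $j$.

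Finally, a union bound yields
\begin{align}
\kappa^{(1)}_n \le \Pr(\mathcal{G}^c) + (s_{\max}n-1)\,\exp\!\left(-m_n\Delta^2/8\right),
\end{align}
where I used $K_n-1\le s_{\max}n-1$ deterministically. Since $m_n=\omega(\log n)$, the exponent dominates $\log n$ and the second term vanishes; together with $\Pr(\mathcal{G}^c)\to 0$ this gives $\kappa^{(1)}_n\to 0$. The main obstacle I anticipate is the second paragraph: rigorously invoking consistency of Blischke's estimator for a \emph{mixture} of Binomials observed through the dependent sequence $\{W_j\}$ rather than an i.i.d. sample. The cleanest route is to verify that the low-order empirical moments of $W_j/m_n$ concentrate (using $1$-dependence and boundedness) and that the moment-to-parameter map is continuous and invertible on the identifiable region where $p_0\neq p_1$ and $\alpha$ is bounded away from $\{0,1\}$; continuity then transfers moment convergence to $(\hat p_0,\hat p_1)\overset{p}{\to}(p_0,p_1)$, which is precisely what $\mathcal{G}$ requires.
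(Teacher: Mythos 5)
Your proof is correct and follows essentially the same route as the paper's: consistency of Blischke's moment estimator places the threshold strictly between $p_1$ and $p_0$, a per-column Binomial tail bound (you use Hoeffding where the paper uses the Chernoff bound with exponents $D(\tau\|p_0)$ and $D(1-\tau\|1-p_1)$) gives decay exponential in $m_n$, and a union bound over $K_n-1=\mathcal{O}(n)$ columns vanishes because $m_n=\omega(\log n)$. If anything, you are more careful than the paper on two points it passes over silently: your explicit good event $\mathcal{G}$ cleanly decouples the data-dependent threshold from the marginal deviation event for each $W_j$, and you flag (and correctly dispatch, via boundedness and $O(1/K_n)$ variances) the $1$-dependence of the sequence $\{W_j\}$ when invoking consistency of the moment estimator.
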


\begin{proof}
See Appendix~\ref{proof: replica detection}.
\end{proof}

Note that the condition in Lemma~\ref{lem: replica detection} is automatically satisfied since $m_n$ is exponential in $n$ (Definition~\ref{defn:dbgrowthrate}). Furthermore, Algorithm~\ref{alg:noisyreplicadetection} has a runtime of $\mathcal{O}(m_n n)$, the computational bottleneck being the computation of the running Hamming distances $W_j$. Finally, we stress that as opposed to deletion detection, discussed in Section~\ref{subsec:deletiondetection}, no seeds are necessary for replica detection.
\subsection{Distribution-Agnostic Deletion Detection Using Seeds}
\label{subsec:deletiondetection}
In this section, we propose a deletion detection algorithm that utilizes the seeds. Since the replica detection algorithm of Section~\ref{subsec:replicadetection} (Algorithm~\ref{alg:noisyreplicadetection}) has a vanishing probability of error, for notational simplicity we will focus on a deletion-only setting throughout this subsection. Let $\mathbf{G}^{(1)}$ and $\mathbf{G}^{(2)}$ be the seed matrices with respective sizes $\Lambda_n\times n$ and $\Lambda_n\times \Tilde{K}_n$, and denote the $j$\textsuperscript{th} column of $\mathbf{G}^{(r)}$ with $\smash{G_j^{(r)}}$, $r=1,2$ where $\Tilde{K}_n\triangleq\sum_{j=1}^n \mathbbm{1}_{[S_j\neq 0]}$. Furthermore, for the sake of brevity, let $L_{i,j}$ denote the Hamming distance between $\smash{G^{(1)}_i}$ and $\smash{G^{(2)}_j}$ for $(i,j)\in[n]\times [\Tilde{K}_n]$. More formally, let 
\begin{align}
    L_{i,j}&\triangleq \sum\limits_{t=1}^{\Lambda_n} \mathbbm{1}_{[G^{(1)}_{t,i}\neq G^{(2)}_{t,j}]}\label{eq:crosshammingdist}
\end{align}
Observe that 
\begin{align}
L_{i,j}&\sim \begin{cases}
        \text{Binom}(\Lambda_n,q_0),& G^{(1)}_i \indep G^{(2)}_{j} \\
        \text{Binom}(\Lambda_n,q_1), & \text{otherwise}
    \end{cases}
\end{align}where
\begin{align}
    q_0 &= 1-\sum\limits_{x\in\mathcal{X}} p_X(x) p_Y(x)\\
    q_1 &= 1-\sum\limits_{x\in\mathcal{X}} p_{X,Y}(x,x)
\end{align}

Thus, we have a problem seemingly similar to the one in Section~\ref{subsec:replicadetection}. However, we cannot utilize similar tools here because of the following: 
\begin{enumerate}[label = \textbf{(\roman*)}]
    \item Recall that the two components $p_0$ and $p_1$ of the Binomial mixture discussed in Section~\ref{subsec:replicadetection} were distinct for any underlying joint distribution $p_{X,Y}$ as long as the databases are dependent, \emph{i.e.,} $p_{X,Y}\neq p_X p_Y$. Unfortunately, the same idea does not automatically work here as demonstrated by the following example: Suppose $X_{i,j}\sim\text{Unif}(\mathcal{X})$, and the transition matrix $\mathbf{P}$ associated with $p_{Y|X}$ has unit trace. Then, 
\begin{align}
    q_0-q_1&=\sum\limits_{x\in\mathcal{X}} p_{X,Y}(x,x) - p_X(x) p_Y(x)\\
    &= \frac{1}{|\mathcal{X}|} \sum\limits_{x\in\mathcal{X}} p_{Y|X}(x|x) - p_Y(x)\\
    &= \frac{1}{|\mathcal{X}|}(\text{tr}(\mathbf{P})-1)\\
    &= 0
\end{align}
In~\cite{bakirtas2022seeded}, this problem was solved using the following modification: Based on $p_{X,Y}$, a bijective remapping $\Phi\in \mathcal{S}(\mathcal{X})$ is picked and applied to all the entries of $\mathbf{G}^{(2)}$ to obtain $\mathbf{G}^{(2)}(\Phi)$ before computing the Hamming distances $L_{i,j}$, where $\mathcal{S}(\mathcal{X})$ denotes the symmetry group of $\mathcal{X}$. Denoting the resulting version of the Hamming distance by $L_{i,j}(\Phi)$, where
\begin{align}
    L_{i,j(\Phi)}&\triangleq \sum\limits_{t=1}^{\Lambda_n} \mathbbm{1}_{[G^{(1)}_{t,i}\neq G^{(2)}_{t,j}(\Phi)]}\label{eq:crosshammingdistremapped}
\end{align}
it was proved in~\cite[Lemma 2]{bakirtas2022seeded} that there as long as $p_{X,Y}\neq p_X p_Y$, there exists $\Phi\in \mathcal{S}(\mathcal{X})$ such that $q_0(\Phi)\neq q_1(\Phi)$. We will call such $\Phi$ a \emph{useful remapping}.

\item In the known distribution setting, we chose a useful remapping $\Phi$ and threshold $\tau_n$ for Hamming distances based on $p_{X,Y}$. In Section~\ref{subsec:replicadetection}, we solved the distribution-agnostic case via parameter estimation in Binomial mixtures. However, the same approach does not work here. Suppose the $j$\textsuperscript{th} retained column $\smash{G_{j}^{(2)}}$ of $\mathbf{G}^{(2)}$ is correlated with $\smash{G_{r_j}^{(1)}}$. Then the $j$\textsuperscript{th} column of $\mathbf{L}(\Phi)$ will have a Binom($\Lambda_n,q_1(\Phi)$) component in the $r_j$\textsuperscript{th} row, whereas the remaining $n-1$ rows will contain Binom($\Lambda_n,q_0(\Phi)$) components, as illustrated in Figure~\ref{fig:deletiondetection}. 
Hence, it can be seen that the mixture parameter $\beta$ of this Binomial mixture distribution approaches 1 since
\begin{align}
    \beta &= \frac{(n-1)\Tilde{K}_n}{n \Tilde{K}_n} = 1-\frac{1}{n}
\end{align}
This imbalance prevents us from performing a parameter estimation as done in Algorithm~\ref{alg:noisyreplicadetection}.
\end{enumerate}

\begin{figure}[t]
\centerline{\includegraphics[width=0.65\textwidth]{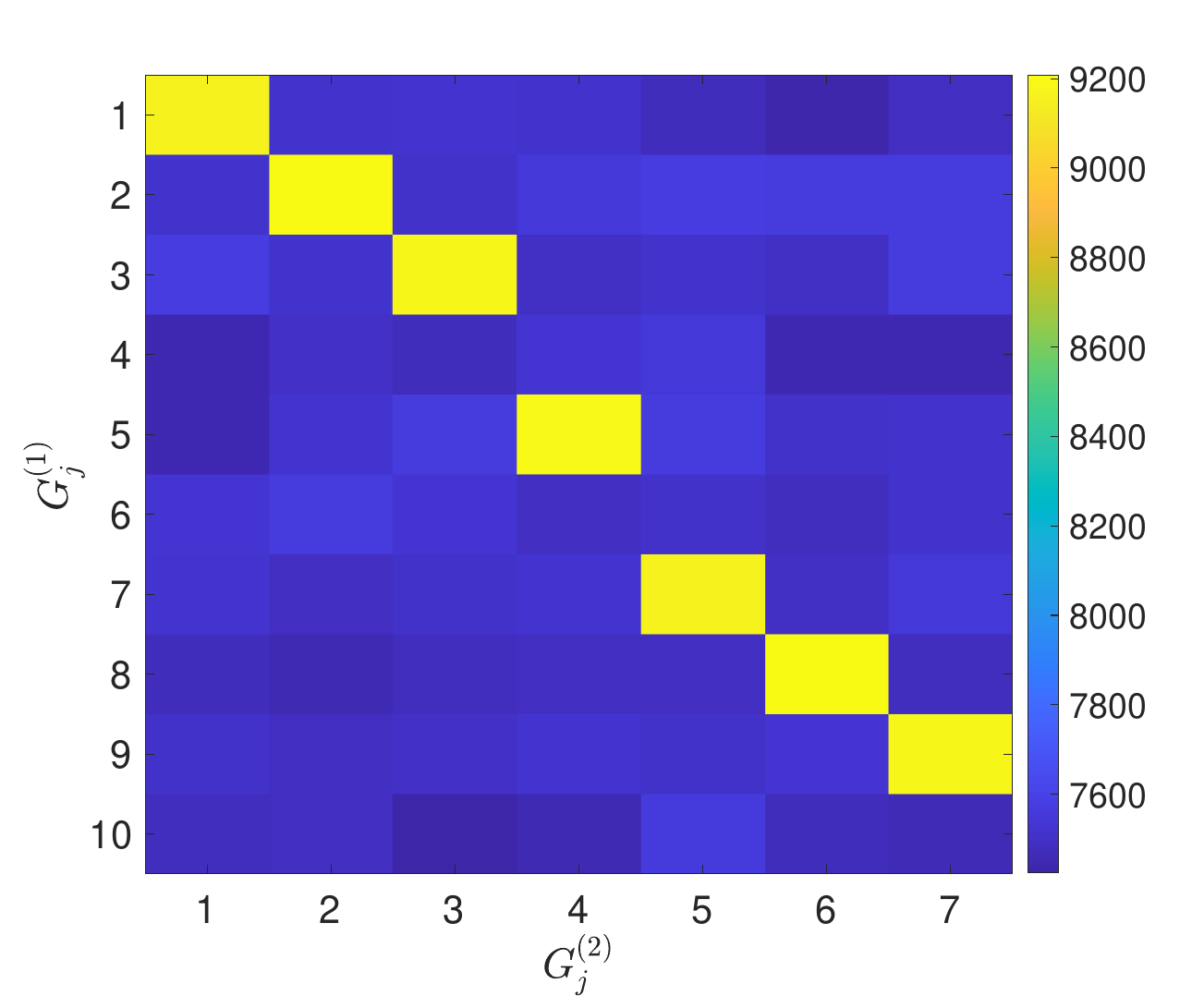}}
\caption{Hamming distances between the columns of $\mathbf{G}^{(1)}$ and $\mathbf{G}^{(2)}$ with $n=10$, $\Tilde{K}_n=7$ and $\Lambda_n=10^4$ for $q_0\approx 0.76$ and $q_1\approx 0.92$. The $(i,j)$\textsuperscript{th} element corresponds to $L_{i,j}$, with the color bar indicating the approximate values. It can be seen that there are no outliers in the $4$\textsuperscript{th}, $6$\textsuperscript{th}, and $10$\textsuperscript{th} rows. Hence, it can be inferred that $I_{\text{del}}=(4,6,10)$.}
\label{fig:deletiondetection}
\end{figure}

In Algorithm~\ref{alg:deletiondetection}, we exploit the aforementioned observation that for a useful mapping $\Phi$, in each column of $\mathbf{L}(\Phi)$, there is exactly one element with a different underlying distribution, while the remaining $n-1$ entries are \emph{i.i.d.}, rendering this entry an \emph{outlier}. Note that $L_{i,j}(\Phi)$ being an outlier corresponds to $G_i^{(1)}$ and $G_j^{(2)}(\Phi)$ being dependent, and in turn $S_i\neq 0$. On the other hand, the lack of outliers in any given column of $\mathbf{L}(\Phi)$ implies that $\Phi$ is not useful. Thus, Algorithm~\ref{alg:deletiondetection} is capable of deciding whether a given remapping is useful or not. In fact, the algorithm sweeps over all elements of $\mathcal{S}(\mathcal{X})$ until we encounter a useful one.

To detect the outliers in $\mathbf{L}(\Phi)$, we propose to use the absolute deviations $\mathbf{M}(\Phi)$, \emph{i.e.,} distances of $L_{i,j}(\Phi)$ to the sample mean $\mu(\Phi)$ of $\mathbf{L}(\Phi)$. More formally, we have
\begin{align}
    \mu(\Phi) &\triangleq \frac{1}{n \Tilde{K}_n}\sum\limits_{i=1}^{n} \sum\limits_{j=1}^{\Tilde{K}_n} L_{i,j}(\Phi)\label{eq:samplemean}\\
    M_{i,j}(\Phi) &\triangleq |L_{i,j}(\Phi)-\mu(\Phi)|,\hspace{1em} \forall (i,j)\in [n]\times[\Tilde{K}_n]\label{eq:absdeviations}
\end{align}
In Algorithm~\ref{alg:deletiondetection}, we test these absolute deviations $M_{i,j}(\Phi)$ against a threshold $\hat{\tau}_n$ independent of the underlying distribution $p_{X,Y}$. If $M_{i,j}(\Phi)$ is lower than $\hat{\tau}_n$, we detect retention \emph{i.e.,} non-deletion.

Note that this step is equivalent to utilizing
Z-scores (also known as standard scores), a well-studied concept in statistical outlier detection~\cite{moore2007basic}, where the absolute deviations are also divided by the sample standard deviation. In Algorithm~\ref{alg:deletiondetection}, for the sake of notational brevity, we will avoid such division. 

We observe that the runtime of Algorithm~\ref{alg:deletiondetection} is $\mathcal{O}(\Lambda_n n^2)$ due to the computation of $\mathbf{L}(\Phi)$.

\begin{algorithm}[H]
\caption{Distribution-Agnostic Seeded Deletion Detection Algorithm}\label{alg:deletiondetection}
\Input{$(\mathbf{G}^{(1)},\mathbf{G}^{(2)},\Lambda_n,n, \Tilde{K}_n,\mathcal{X})$}
\Output{$\hat{I}_R$}
$\mathcal{S}(\mathcal{X})\gets$ SymmetryGroup($\mathcal{X}$)\;

$\hat{\tau}_n\gets 2 \Lambda_n^{2/3} (\log n)^{1/3}$\Comment*[r]{Threshold}

\For{$s \gets 1$ \KwTo $|\mathcal{X}|!$}{
$\hat{I}_R \gets\varnothing$\;
$\Phi\gets\mathcal{S}(\mathcal{X})[s]$\Comment*[r]{Pick a remapping.} 
$\mathbf{L}(\Phi) \gets $ HammDist($\mathbf{G}^{(1)},\mathbf{G}^{(2)}(\Phi)$)\Comment*[r]{Eq. \eqref{eq:crosshammingdist}}
$\mu(\Phi)\gets$ SampleMean($\mathbf{L}(\Phi) $)\Comment*[r]{Eq.~\eqref{eq:samplemean}}
$\mathbf{M}(\Phi) \gets |\mathbf{L}(\Phi)-\mu(\Phi)|$\;

\For{$j\gets1$ \KwTo $\Tilde{K}_n$}{
count $\gets0$\; 
\For{$i\gets1$ \KwTo $n$}{
\If{$\mathbf{M}(\Phi)[i][j]\le\hat{\tau}_n$}{
$\hat{I}_R\gets \hat{I}_R\cup \{i\} $\;
count $\gets$ count $+$ $1$\;
}

}
\blue{/* count = 0: no outliers ($\Phi$ is useless). */}\\
\blue{/* count $>$ 1: misdetection. */}\\
\eIf{$\textup{count}>1$}{\Return ERROR}{\If{$\textup{count}=0$}{Skip to next $\Phi$\;}}
}
\Return $\hat{I}_R$\;
}

\end{algorithm}
\newpage

Lemma~\ref{lem: deletion detection} below states that for sufficient seed size, ${\Lambda_n=\omega(\log n)= \omega(\log\log m_n)}$, Algorithm~\ref{alg:deletiondetection} works correctly with high probability.
\begin{lem}{\textbf{(Deletion Detection)}}
\label{lem: deletion detection}
    Let ${I_R=\{j\in[n]: S_j\neq 0\}}$ be the true set of indices of retained columns and $\hat{I}_R$ be its estimate output by Algorithm~\ref{alg:deletiondetection}. Then for any seed size ${\Lambda_n=\omega(\log n)}$, we have
    \begin{align}
        \kappa^{(2)}_n\triangleq \lim\limits_{n\to\infty} \Pr(\hat{I}_R = I_R) &= 1
    \end{align}
\end{lem}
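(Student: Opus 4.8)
The plan is to exhibit a single high-probability ``good event'' on which the threshold test in Algorithm~\ref{alg:deletiondetection} classifies every entry of every matrix $\mathbf{L}(\Phi)$ correctly, so that the sweep skips each useless remapping, accepts the first useful one, and returns exactly the set of outlier rows, which equals $I_R$. First I would condition on the deletion pattern $S^n$ (fixing $\tilde K_n$ and, for each retained column $j$, its original index $r_j\in[n]$, so that $\{r_j:j\in[\tilde K_n]\}=I_R$) and recall from \cite[Lemma 2]{bakirtas2022seeded} that, since $p_{X,Y}\ne p_Xp_Y$, a useful remapping $\Phi$ with $c_\Phi\triangleq|q_0(\Phi)-q_1(\Phi)|>0$ exists. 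The key structural fact is that for a useful $\Phi$ column $j$ of $\mathbf{L}(\Phi)$ contains one matching entry $L_{r_j,j}(\Phi)\sim\text{Binom}(\Lambda_n,q_1(\Phi))$ and $n-1$ bulk entries $\sim\text{Binom}(\Lambda_n,q_0(\Phi))$, whereas for a useless $\Phi$ all entries are $\text{Binom}(\Lambda_n,q_0(\Phi))$. Because the matching entries are only a $1/n$ fraction, the sample mean obeys $\mathbb{E}[\mu(\Phi)]=\bar q\Lambda_n$ with $\bar q=q_0-(q_0-q_1)/n=q_0+\mathcal{O}(1/n)$.

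The entire argument rests on the threshold $\hat\tau_n=2\Lambda_n^{2/3}(\log n)^{1/3}$ lying strictly between the bulk-fluctuation scale and the outlier-separation scale. Concretely I would define the good event as the intersection of (i) $|\mu(\Phi)-\bar q\Lambda_n|\le\tfrac14\hat\tau_n$ for every $\Phi$; (ii) $|L_{i,j}(\Phi)-\mathbb{E}L_{i,j}(\Phi)|\le\tfrac14\hat\tau_n$ for every bulk entry and every $\Phi$; and (iii) $|L_{r_j,j}(\Phi)-q_1(\Phi)\Lambda_n|\le\tfrac14 c_\Phi\Lambda_n$ for every matching entry and every useful $\Phi$. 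To get (i) despite the dependence among the $L_{i,j}(\Phi)$ (each $G^{(1)}_i$ and $G^{(2)}_j$ is shared across a row/column), I would write $\mu(\Phi)=\tfrac{1}{n\tilde K_n}\sum_{t=1}^{\Lambda_n}A_t$, where $A_t$ counts the disagreements contributed by seed row $t$; the terms $A_t/(n\tilde K_n)\in[0,1]$ are i.i.d.\ across the $\Lambda_n$ seed rows, so Hoeffding applies directly. Items (ii)--(iii) follow from the Binomial Hoeffding bound. On the good event a bulk entry satisfies $|L_{i,j}-\mu|\le|L_{i,j}-q_0\Lambda_n|+|q_0-\bar q|\Lambda_n+|\mu-\bar q\Lambda_n|\le\tfrac12\hat\tau_n+\mathcal{O}(\Lambda_n/n)\le\hat\tau_n$ (the $\mathcal{O}(\Lambda_n/n)$ bias is $o(\hat\tau_n)$ in the poly-logarithmic seed regime of interest), so it is not flagged; a matching entry satisfies $|L_{r_j,j}-\mu|\ge c_\Phi(1-1/n)\Lambda_n-\tfrac14 c_\Phi\Lambda_n-\tfrac14\hat\tau_n\ge\hat\tau_n$ for large $n$, so it is flagged. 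Hence every useless $\Phi$ produces no flagged entry in its first column and is skipped, while the first useful $\Phi$ flags exactly one entry per column, at row $r_j$, yielding $\hat I_R=\{r_j:j\in[\tilde K_n]\}=I_R$ and never triggering the more-than-one-outlier branch.

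It then remains to bound the complement of the good event by a union bound, and this is the step where the exponents $2/3,1/3$ and the hypothesis $\Lambda_n=\omega(\log n)$ enter in an essential, tight way. Hoeffding gives a per-entry bulk failure probability $2\exp(-\tfrac12\Lambda_n^{1/3}(\log n)^{2/3})$; since $\Lambda_n=\omega(\log n)$ forces $\Lambda_n^{1/3}(\log n)^{2/3}=\omega(\log n)$, multiplying by the $\mathcal{O}(n^2)$ bulk entries across the constantly many remappings (as $|\mathcal{X}|!$ does not depend on $n$) still tends to zero. The matching-entry failure probability is $2\exp(-\Theta(c_\Phi^2\Lambda_n))$, and multiplying by the $\mathcal{O}(n)$ matching entries also vanishes because $\Lambda_n\to\infty$. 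Combining, $\Pr(\hat I_R\ne I_R)\to 0$.

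The main obstacle I anticipate is precisely this calibration: one must verify that $\hat\tau_n$ simultaneously (a) dominates the maximum of $\sim n^2$ bulk deviations, which forces $\hat\tau_n\gg\sqrt{\Lambda_n\log n}$ and is exactly why the sufficient seed size matches $\Lambda_n=\omega(\log n)$, and (b) stays far below the $\Theta(\Lambda_n)$ outlier gap, which forces $\hat\tau_n=o(\Lambda_n)$. Making these two requirements compatible \emph{uniformly over all remappings}, while simultaneously controlling the $\mathcal{O}(1/n)$ mean bias and the sample-mean fluctuation (handled cleanly by the i.i.d.-over-seed-rows decomposition), is the delicate bookkeeping at the heart of the proof.
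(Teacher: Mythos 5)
Your proposal is correct and follows the same overall architecture as the paper's proof: a high-probability good event on which the threshold $\hat\tau_n=2\Lambda_n^{2/3}(\log n)^{1/3}$ separates the $\Theta(\Lambda_n)$ outlier gap from the bulk fluctuations, union bounds over the $\mathcal{O}(n^2)$ bulk entries and $\mathcal{O}(n)$ matching entries, and a final union over the constantly many ($|\mathcal{X}|!$) remappings, with useless remappings dismissed because they produce no outliers. The differences are in the concentration machinery, and they are mild refinements rather than a new route. The paper uses the Chernoff bound with relative-entropy exponents and then a second-order MacLaurin expansion $D(q-\epsilon\|q)=c(q)\epsilon^2+\mathcal{O}(\epsilon^3)$ to extract the $\Lambda_n\epsilon_n^2=\Gamma_n^{1/3}\log n=\omega(\log n)$ rate (writing $\Lambda_n=\Gamma_n\log n$, $\epsilon_n=\Gamma_n^{-1/3}$), whereas you invoke Hoeffding directly and get the same exponent $\Lambda_n^{1/3}(\log n)^{2/3}=\omega(\log n)$ without the expansion — cleaner, at the cost of slightly looser constants. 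For the sample mean $\mu(\Phi)$, the paper is content with Chebyshev (polynomial decay suffices since $\mu$ appears only once, not in a union over entries), while your i.i.d.-over-seed-rows decomposition gives an exponential bound and, more importantly, cleanly sidesteps the dependence among the $L_{i,j}(\Phi)$ that share seed columns — a point the paper does not discuss. You are also more careful than the paper on one technical point: the mean of $\mu(\Phi)$ is $\Lambda_n\bigl(q_0(\Phi)+\mathcal{O}(1/n)\bigr)$, not $\Lambda_n q_0(\Phi)$, and you verify the $\mathcal{O}(\Lambda_n/n)$ bias is $o(\hat\tau_n)$ in the relevant seed regime; the paper silently centers at $\Lambda_n q_0(\Phi)$. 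One small slip: your claim that the matching-entry union bound vanishes \textquotedblleft because $\Lambda_n\to\infty$\textquotedblright{} is not sufficient — beating the $\mathcal{O}(n)$ factor against $2^{-\Theta(c_\Phi^2\Lambda_n)}$ requires $\Lambda_n=\omega(\log n)$, which your hypothesis supplies, so the proof stands once the justification is stated correctly. Your closing observation that compatibility of $\hat\tau_n=\omega(\sqrt{\Lambda_n\log n})$ and $\hat\tau_n=o(\Lambda_n)$ forces exactly $\Lambda_n=\omega(\log n)$ is a nice explicit account of why the seed-size condition is what it is, which the paper leaves implicit.
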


\begin{proof}
See Appendix~\ref{proof: deletion detection}.
\end{proof}

\subsection{Distribution-Agnostic De-Anonymization Scheme}
\label{subsec:achievability}
In this section, we propose a de-anonymization scheme by combining Algorithm~\ref{alg:noisyreplicadetection} and Algorithm~\ref{alg:deletiondetection}, and performing a modified version of the typicality-based de-anonymization proposed in~\cite{bakirtas2022seeded}. This then leads to the achievability proof of Theorem~\ref{thm:mainresult} in Section~\ref{subsec:proofofmainresult}. 

Given the database pair $(\mathbf{X},\mathbf{Y})$ and the corresponding seed matrices $(\mathbf{G}^{(1)},\mathbf{G}^{(2)})$, the proposed de-anonymization scheme given in Algorithm~\ref{alg:matchingscheme} is as follows: 
\begin{enumerate}[label=\textbf{ (\roman*)}]
    \item Detect the replicas through Algorithm~\ref{alg:noisyreplicadetection}.
    \item Remove all the extra replica columns from the seed matrix $\mathbf{G}^{(2)}$ to obtain $\tilde{\mathbf{G}}^{(2)}$ and perform seeded deletion detection via Algorithm~\ref{alg:deletiondetection} using $\mathbf{G}^{(1)},\tilde{\mathbf{G}}^{(2)}$. At this step, we have an estimate $\hat{S}^n$ of the column repetition pattern $S^n$.
    \item Based on $\hat{S}^n$ and the matching entries in $\mathbf{G}^{(1)},\tilde{\mathbf{G}}^{(2)}$, obtain the maximum likelihood estimate~\cite{anderson1957statistical} $\hat{p}_{X,Y^S|S}$ of $p_{X,Y^S|S}$
    where
    \begin{align}
     \hat{p}_X(x)&\triangleq \frac{1}{\Lambda_n n} \sum\limits_{i=1}^{\Lambda_n}\sum\limits_{j=1}^{n}\mathbbm{1}_{[G_{i,j}^{(1)}=x]},\hspace{2em}\forall x\in\mathcal{X}\label{eq:estimateddist1}\\
     \hat{p}_{Y|X}(y|x) &\triangleq\frac{\sum\limits_{i=1}^{\Lambda_n}\sum\limits_{j=1}^{\tilde{K}_n}\mathbbm{1}_{[G^{(1)}_{i,r_j}=x,\tilde{G}^{(2)}_{i,j}=y ]}}{\sum\limits_{i=1}^{\Lambda_n}\sum\limits_{j=1}^{\tilde{K}_n}\mathbbm{1}_{[\tilde{G}^{(2)}_{i,j}=y ]}} ,\hspace{1em}\forall (x,y)\in\mathcal{X}^2\\
     \hat{p}_S(s) &\triangleq\frac{1}{n}\sum\limits_{j=1}^n \mathbbm{1}_{[S_j=s]},\hspace{5.5em}\forall s\ge0\label{eq:estimateddist3}
    \end{align}
    and construct
    \begin{align}
        \hat{p}_{X,{Y}^S|S}(x,y^s|s)&=\begin{cases}
      \hat{p}_X(x) \mathbbm{1}_{[y^s = \ast]} &\text{if } s=0\\
      \hat{p}_X(x) \prod\limits_{j=1}^s \hat{p}_{Y|X}(y_j|x) &\text{if } s\ge 1
    \end{cases}
    \end{align}
with $y^s=y_1\dots y_s$ and $\ast$ denoting erasure.
    \item Using $\hat{S}^n$, place markers between the noisy replica runs of different columns to obtain $\tilde{\mathbf{Y}}$. If a run has length 0, \emph{i.e.} deleted, introduce a column consisting of erasure symbol $\ast\notin\mathcal{X}$.
    \item Fix $\epsilon>0$. Match the $l$\textsuperscript{th} row $Y^{K_n}_l$ of $\tilde{\mathbf{Y}}$ with the $i$\textsuperscript{th} row $X^n_i$ of {$\mathbf{X}$}, if $X_i$ is the only row of {$\mathbf{X}$} jointly $\epsilon$-typical (See Definition~\ref{defn:jointtypicality}) with $Y^{K_n}_l$ according to $\hat{p}_{X,Y^S|S}$, assigning $\hat\sigma_n(i)=l$.
Otherwise, declare an error.
\end{enumerate}
Finally, note that the runtime of Algorithm~\ref{alg:matchingscheme} is $\mathcal{O}(m_n^2 n)$ due to the typicality check (each $\mathcal{O}(n)$) for all row pairs $(X_i^n,Y_j^{K_n})$, $(i,j)\in[m_n]^2$.

\subsection{Proof of Theorem~\ref{thm:mainresult}}
\label{subsec:proofofmainresult}
We are now ready to prove Theorem~\ref{thm:mainresult} where we use Algorithm~\ref{alg:matchingscheme} to obtain a matching scheme $\hat{\sigma}_n$.

\begin{algorithm}[t]
\caption{Distribution-Agnostic Database De-Anonymization Scheme}\label{alg:matchingscheme}
\Input{$(\mathbf{X},\mathbf{Y},\epsilon,\mathbf{G}^{(1)},\mathbf{G}^{(2)})$}
\Output{$\hat{\sigma}_n$}
\textup{isReplica}$\gets$ Alg.\ref{alg:noisyreplicadetection}($\mathbf{Y}$)\Comment*[r]{\textbf{Step (i)}.}
\textup{isDeleted}$\gets$ Alg.\ref{alg:deletiondetection}($\mathbf{G}^{(1)},\mathbf{G}^{(2)}$,\textup{isReplica})\Comment*[r]{\textbf{Step (ii)}.}
$\hat{S}^n \gets $ EstimateRepetitionPattern(isReplica,isDeleted)\;
$\hat{p}_{X,Y^S|S}\gets $ EstimateDist($\mathbf{G}^{(1)},\mathbf{G}^{(2)},\hat{S}^n$)\Comment*[r]{\textbf{Step (iii)}.}
$\tilde{\mathbf{Y}}\gets$ MarkerAddition($\mathbf{Y},\hat{S}^n$)\Comment*[r]{\textbf{Step (iv)}.}

\For{$i = 1$ \KwTo  \textup{rowSize($\mathbf{X}$)}}{
\textup{count}$\gets 0$\;
\For{$j = 1$ \KwTo  \textup{rowSize($\tilde{\mathbf{Y}}$)}}{
  \If{\textup{isJointTypical(}$\mathbf{X}[i][:],\mathbf{Y}[j][:], \hat{S}^n,\hat{p}_{X,Y^S|S},\epsilon$\textup{)}}{
    $\hat{\sigma}_n[i] \gets$ j\;
    count$\gets$ count + 1\;
  }
  
}
\If{\textup{count} $\neq 1$}{
$\hat{\sigma}_n[i] \gets 0$\Comment*[r]{Matching error.}
}
}

\end{algorithm}

Let $\kappa_n^{(1)}$ and $\kappa_n^{(2)}$ be the error probabilities of the noisy replica detection (Algorithm~\ref{alg:noisyreplicadetection}) and the seeded deletion (Algorithm~\ref{alg:deletiondetection}) algorithms, respectively.
Using~\eqref{eq:estimateddist1}-\eqref{eq:estimateddist3} and the Law of Large Numbers, we have 
\begin{align}
    \hat{p}_{X,Y^S|S}\overset{p}{\to} p_{X,Y^S|S}
\end{align}
and by the Continuous Mapping Theorem~\cite[Theorem 2.3]{van2000asymptotic} we have
\begin{align}
    \hat{H}(X,Y^S|S)&\overset{p}{\to} H(X,Y^S|S)\\
I(\hat{X};\hat{Y}^{\hat{S}}|\hat{S})& \overset{p}{\to} I(X;Y^S|S)
\end{align}
where $\hat{H}(X,Y^S|S)$ and $\hat{I}(X,Y^S|S)$ denote the conditional joint entropy and conditional mutual information associated with $\hat{p}_{X,Y^S|S}$, respectively. Thus, for any $\epsilon>0$ we have
\begin{align}
\kappa_n^{(3)}&\triangleq\Pr(|\hat{H}(X,Y^S|S)-H(X,Y^S|S)|>\epsilon)\overset{n\to\infty}{\longrightarrow}0\\
\kappa_n^{(4)}&\triangleq\Pr(|\hat{I}(X,Y^S|S)-I(X,Y^S|S)|>\epsilon)\overset{n\to\infty}{\longrightarrow}0
\end{align}

Using Proposition~\ref{prop:jointaep} and a series of union bounds and triangle inequalities involving $\hat{H}(X,Y^S|S)$, $H(X,Y^S|S)$, $\hat{I}(X,Y^S|S)$, and $I(X,Y^S|S)$, the probability of error of the de-anonymization scheme (See~\eqref{eq:errordefiniton}.) can be bounded as 
\begin{align}
    \Pr(\text{Error})&\le 2^{-n(I(X;Y^S|S)-4 \epsilon-R)}+\epsilon + \sum\limits_{i=1}^4 \kappa_n^{(i)}\label{eq:error}\\
    &\le 2\epsilon
\end{align}
as $n\to\infty$ as long as $R<I(X;Y^S|S)-4\epsilon$, concluding the proof of the main result.

\subsection{No-Obfuscation Setting}
\label{subsec:noiseless}
In Sections~\ref{subsec:replicadetection} and \ref{subsec:deletiondetection}, we focused on the noisy/obfuscated setting and proposed algorithms for detecting noisy replicas (Algorithm~\ref{alg:noisyreplicadetection}) and column deletions (Algorithm~\ref{alg:deletiondetection}). As discussed in Sections~\ref{subsec:replicadetection} and \ref{subsec:deletiondetection}, the key idea behind detection is either extracting permutation-invariant features of the columns (Algorithm~\ref{alg:noisyreplicadetection}) or assuming the correct matching is given for the seeds (Algorithm~\ref{alg:deletiondetection}).
We showed that in the general noisy setting, for the latter approach to succeed a double-logarithmic seed size $\Lambda_n$ is sufficient.

Now, we will focus on the no-obfuscation (repetition-only) setting, where 
\begin{align}
    p_{Y|X}(y|x) &= \mathbbm{1}_{[y=x]}, \hspace{1em} \forall (x,y)\in \mathcal{X}^2
\end{align}
as a special case. Following a similar approach to Algorithm~\ref{alg:noisyreplicadetection}, through the extraction of a new set of permutation-invariant features of the database columns, we argue that seeds are not required for repetition (both \emph{replica and deletion}) detection. Specifically, we will replace Algorithm~\ref{alg:noisyreplicadetection} and Algorithm~\ref{alg:deletiondetection} a single the histogram-based detection algorithm of~\cite{bakirtas2022database} which works as follows:
\begin{enumerate}[label = \textbf{(\roman*)}]
    \item First, we construct the respective column histogram matrices $\mathbf{H}^{(1)}$ and $\mathbf{H}^{(2)}$ and  of $\mathbf{X}$ and $\mathbf{Y}$ as follows:
    \begin{align}
        H^{(1)}_{i,j} &= \sum\limits_{t=1}^{m_n} \mathbbm{1}_{[X_{t,j}=i]},\hspace{1em} \forall i\in\mathcal{X}, \forall j\in[n]\label{eq:hist1}\\
        H^{(2)}_{i,j} &= \sum\limits_{t=1}^{m_n} \mathbbm{1}_{[Y_{t,j}=i]},\hspace{1em} \forall i\in\mathcal{X}, \forall j\in[K_n]\label{eq:hist2}
    \end{align}

    \item Next, we count the number of times the $j$\textsuperscript{th} column $\smash{{H}^{(1)}_j}$ of $\mathbf{H}^{(1)}$ is present in $\mathbf{M}^{(2)}$. If $\smash{{H}^{(1)}_j}$ is present $s$ times in $\mathbf{H}^{(2)}$, we assign $\hat{S}_j = s$, where $\hat{S}^n$ is the estimate of the repetition pattern $S^n$. If $\smash{{H}^{(1)}_j}$ is absent in $\mathbf{H}^{(2)}$, we assign $\hat{S}_j = 0$.
\end{enumerate}

Note that a repetition error could occur only when there are identical columns in $\mathbf{H}^{(1)}$ whose probability, we argue, vanishes with the column size $n$ in the following lemma:
\begin{lem}{\textbf{(Asymptotic Uniqueness of the Column Histograms \cite[Lemma 1]{bakirtas2022database})}}\label{lem:histogram}
Let ${H}^{(1)}_j$ denote the histogram of the $j$\textsuperscript{th} column of $\mathbf{X}$.
Then, for any $m_n=\omega(n^\frac{4}{|\mathcal{X}|-1})$
\begin{align}
    \lim\limits_{n\to \infty}\Pr\left(\exists i,j\in [n],\: i\neq j,H^{(1)}_i=H^{(1)}_j\right)\to 0. \label{eq:HBD}
\end{align}
\end{lem}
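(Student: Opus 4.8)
The plan is to combine a union bound over column pairs with a sharp bound on the \emph{collision probability} of the multinomial law governing a single column histogram. Since the columns of $\mathbf{X}$ are i.i.d., each histogram $H^{(1)}_j=(H^{(1)}_{1,j},\dots,H^{(1)}_{|\mathcal{X}|,j})$ is a multinomial$(m_n,p_X)$ random vector supported on the lattice simplex $\{h\in\mathbb{Z}_{\ge0}^{|\mathcal{X}|}:\sum_x h_x=m_n\}$, and by Definition~\ref{defn:unlabeleddb} the support of $p_X$ is all of $\mathcal{X}$, so every coordinate probability is strictly positive. For any fixed pair $i\ne j$, independence gives
\[
\Pr\!\left(H^{(1)}_i=H^{(1)}_j\right)=\sum_h \Pr\!\left(H^{(1)}_1=h\right)^2\le \max_h \Pr\!\left(H^{(1)}_1=h\right),
\]
so the whole problem reduces to controlling the peak of the multinomial p.m.f.

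First I would bound this peak. The mode of the multinomial is attained near $h^\star\approx m_n p_X$, and evaluating the multinomial coefficient there via Stirling's formula yields $\max_h \Pr(H^{(1)}_1=h)=\Theta\!\left(m_n^{-(|\mathcal{X}|-1)/2}\right)$. The exponent $-(|\mathcal{X}|-1)/2$ arises because, although there are $|\mathcal{X}|$ factorials each contributing a $\sqrt{m_n}$ factor, the constraint $\sum_x h_x=m_n$ removes one degree of freedom, leaving $|\mathcal{X}|-1$ effectively free Gaussian directions each of width $\Theta(\sqrt{m_n})$. Equivalently, this is the local central limit theorem statement that the recentered, rescaled histogram converges to a $(|\mathcal{X}|-1)$-dimensional Gaussian whose peak density scales as $m_n^{-(|\mathcal{X}|-1)/2}$.

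Then I would close the argument with a union bound over the $\binom{n}{2}$ unordered column pairs:
\[
\Pr\!\left(\exists\, i\ne j:\;H^{(1)}_i=H^{(1)}_j\right)\le \binom{n}{2}\,\max_h \Pr\!\left(H^{(1)}_1=h\right)=O\!\left(n^2\, m_n^{-(|\mathcal{X}|-1)/2}\right).
\]
This vanishes precisely when $n^2=o\!\left(m_n^{(|\mathcal{X}|-1)/2}\right)$, i.e.\ when $m_n=\omega\!\left(n^{4/(|\mathcal{X}|-1)}\right)$, which is exactly the hypothesis of the lemma, so the threshold is not an artifact of the analysis but matches the collision-probability scaling.

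The main obstacle is making the peak bound $\max_h \Pr=\Theta\!\left(m_n^{-(|\mathcal{X}|-1)/2}\right)$ rigorous and \emph{uniform}: one must apply Stirling's approximation simultaneously to all $|\mathcal{X}|$ coordinate factorials with error terms controlled uniformly in $n$, and verify that the implicit constant depends only on the fixed distribution $p_X$ and the alphabet size $|\mathcal{X}|$ (in particular that small but positive values of $p_X(x)$ do not degrade the bound). Since $p_X$ and $|\mathcal{X}|$ do not vary with $n$, these constants are absolute, and hence the $O(\cdot)$ in the union bound is legitimate and the claimed threshold follows directly.
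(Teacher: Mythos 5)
Your proof is correct and is essentially the intended argument: the paper does not reprove this lemma (it imports it from \cite{bakirtas2022database}), and the underlying proof is exactly your combination of a union bound over the $\binom{n}{2}$ column pairs with the collision estimate $\Pr\bigl(H^{(1)}_i=H^{(1)}_j\bigr)=\sum_h \Pr\bigl(H^{(1)}_1=h\bigr)^2=\Theta\bigl(m_n^{-(|\mathcal{X}|-1)/2}\bigr)$ obtained from the local CLT/Stirling analysis of the multinomial law. The paper's Proposition~\ref{prop:histogramuniform} confirms your scaling is tight, giving $\xi_n\sim n^2 m_n^{(1-|\mathcal{X}|)/2}(4\pi)^{(1-|\mathcal{X}|)/2}|\mathcal{X}|^{|\mathcal{X}|/2}$ in the uniform case, which matches your union bound up to the constant you lose by replacing $\sum_h p_h^2$ with $\max_h p_h$.
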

Note that since $m_n$ is exponential in $n$ (See Definition~\ref{defn:dbgrowthrate}), $m_n=\omega(n^s)$ $\forall s\in\mathbb{Q}$. Hence, the order relation given in Lemma~\ref{lem:histogram} is automatically satisfied and hence repetition detection could be performed without any seeds in the no-obfuscation setting with a runtime of $\mathcal{O}(m_n n)$ due to the computation of the histogram matrices $(\mathbf{H}^{(1)},\mathbf{H}^{(2)})$.

For the non-obfuscation setting, we also consider a modified de-anonymization scheme, given in Algorithm~\ref{alg:noiselessmatching}, that does not rely on any estimates of the underlying distributions. Specifically, we perform de-anonymization via exact sequence matching as follows:
\begin{enumerate}[label = \textbf{(\roman*)}]
    \item First, we perform repetition detection via the histogram-based detection algorithm described above.
    \item Next, we discard the deleted columns from $\mathbf{X}$ to obtain $\smash{\bar{\mathbf{X}}}$. 
    \item Similarly, we discard all additional copies in a repetition run in $\mathbf{Y}$ to obtain $\bar{\mathbf{Y}}$.
    \item Finally, we match the $i$\textsuperscript{th} row $X_i^n$ with the $j$\textsuperscript{th} row $Y_j^{K_n}$, assigning $\hat{\sigma}_n(i)=j$, if $\smash{\bar{Y}_j^{\hat{K}_n}}$ is the only row of $\bar{\mathbf{Y}}$ equal to $\smash{\bar{X}_i^{\hat{K}_n}}$.
\end{enumerate}

The histogram-based detection and the modified de-anonymization scheme are given in Algorithm~\ref{alg:noiselessmatching}. We note that the overall runtime of Algorithm~\ref{alg:noiselessmatching} is $\mathcal{O}(m_n^2 n)$ due to the exact row matching (each $\mathcal{O}(n)$) of $m_n^2$ row pairs.

The matching scheme described above and given in Algorithm~\ref{alg:noiselessmatching} leads to the achievability result given in Theorem~\ref{thm: noiseless}. The converse immediately follows from the distribution-aware matching capacity result of~\cite[Theorem 2]{bakirtas2023database}. 
\begin{thm}{\textbf{(Distribution-Agnostic Matching Capacity in the No-Obfuscation Setting)}} \label{thm: noiseless}
    Consider an anonymized and labeled correlated database pair. We assume that the underlying database distribution $p_X$ and the column repetition distribution $p_S$ are unknown. Suppose there is no obfuscation, \emph{i.e.,}
    \begin{align}
    p_{Y|X}(y|x) &= \mathbbm{1}_{[y=x]}, \hspace{1em} \forall (x,y)\in \mathcal{X}^2.
\end{align}
    Then, for any seed size $\Lambda_n$, the distribution-agnostic matching capacity $C$ is given by
    \begin{align}
        C&= (1-\delta) H(X).\label{eq:noiselessachievability}
    \end{align}
\end{thm}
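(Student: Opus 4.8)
The converse is inherited directly: any distribution-agnostic scheme is in particular a valid scheme for the distribution-aware problem, so $C\le(1-\delta)H(X)$ follows from~\cite[Theorem 2]{bakirtas2023database}. Hence the plan is to establish achievability by analyzing Algorithm~\ref{alg:noiselessmatching} and showing its error probability under the almost-perfect criterion~\eqref{eq:errordefiniton} vanishes whenever $R<(1-\delta)H(X)$. The first step is to dispose of repetition detection: by Lemma~\ref{lem:histogram}, and since $m_n$ is exponential in $n$ so that $m_n=\omega(n^{4/(|\mathcal{X}|-1)})$ holds automatically (Definition~\ref{defn:dbgrowthrate}), the histogram-based detector recovers $\hat{S}^n=S^n$ with probability $1-o(1)$. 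I would condition on this event throughout. On it, $\bar{\mathbf{X}}$ and $\bar{\mathbf{Y}}$ are both built from the $\tilde{K}_n=\sum_{j}\mathbbm{1}_{[S_j\neq0]}$ retained columns, and since $p_{Y|X}$ is the identity, $\bar{\mathbf{Y}}$ is exactly the $\sigma_n$-row-permutation of $\bar{\mathbf{X}}$; thus $\bar{Y}_{\sigma_n(i)}=\bar{X}_i$ always holds, and de-anonymization reduces to exact matching of $m_n$ i.i.d.\ length-$\tilde{K}_n$ sequences drawn from $p_X$.

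For a uniformly chosen row $I$, an error can occur only if some other row collides with it after deletion, i.e.\ $\exists\,i'\neq I$ with $\bar{X}_{i'}^{\tilde{K}_n}=\bar{X}_{I}^{\tilde{K}_n}$. Conditioning on the realization $\bar{X}_I=x^{\tilde{K}_n}$ and using independence of the rows,
\begin{align}
\Pr\!\big(\text{collision}\,\big|\,\bar{X}_I=x^{\tilde{K}_n}\big)
&=1-\big(1-p_X^{\tilde{K}_n}(x^{\tilde{K}_n})\big)^{m_n-1}
\le \min\big(1,\;m_n\,p_X^{\tilde{K}_n}(x^{\tilde{K}_n})\big).
\end{align}
I would then take the expectation over $x^{\tilde{K}_n}$ and split it according to whether $x^{\tilde{K}_n}$ lies in the $\epsilon$-typical set $A_\epsilon^{(\tilde{K}_n)}(X)$: on the atypical part, bound the inner probability by $1$ and invoke the AEP so that $\Pr(\bar{X}_I\notin A_\epsilon^{(\tilde{K}_n)}(X))\to0$; on the typical part, use $p_X^{\tilde{K}_n}(x^{\tilde{K}_n})\le 2^{-\tilde{K}_n(H(X)-\epsilon)}$ to obtain the bound $m_n\,2^{-\tilde{K}_n(H(X)-\epsilon)}$.

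To finish, I would exploit the concentration of the number of retained columns: since the $S_j$ are i.i.d., the Law of Large Numbers gives $\tilde{K}_n/n\overset{p}{\to}1-\delta$, so for any $\epsilon'>0$ we have $\tilde{K}_n\ge(1-\delta-\epsilon')n$ with probability $1-o(1)$. On that event the typical-part bound becomes $2^{nR}\,2^{-(1-\delta-\epsilon')n(H(X)-\epsilon)}=2^{-n[(1-\delta-\epsilon')(H(X)-\epsilon)-R]}$, which tends to $0$ whenever $R<(1-\delta)H(X)$ and $\epsilon,\epsilon'$ are taken small enough (the case $\delta=1$ is trivial, giving $C=0$). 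A final union bound collecting the repetition-detection error, the atypicality probability, the $\tilde{K}_n$-concentration error, and the vanishing typical-collision exponent then yields $\Pr(\sigma_n(I)\neq\hat{\sigma}_n(I))\to0$, so every $R<(1-\delta)H(X)$ is distribution-agnostically achievable.

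The main obstacle is that the naive pairwise collision probability $\sum_{x^k}p_X(x^k)^2$ decays only at the R\'enyi-$2$ rate $2^{-\tilde{K}_n\,(-\log\sum_x p_X(x)^2)}$, which is strictly faster-to-fail than the Shannon rate and would yield a strictly smaller (incorrect) capacity. The device that recovers the sharp exponent $H(X)$ is to retain the $\min(1,\cdot)$ truncation and split on typicality rather than applying the union bound globally, crucially leveraging that the almost-perfect criterion charges only the error of a single uniformly drawn row $I$ instead of the union over all $m_n$ rows. The remaining care is purely bookkeeping: making the conditioning chain (correct repetition detection, $\tilde{K}_n$-concentration, typicality of $\bar{X}_I$) rigorous, each conditioning event having vanishing complement absorbed by the closing union bound.
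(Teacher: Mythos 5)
Your proposal is correct and follows essentially the same route as the paper's proof: converse inherited from the distribution-aware result of~\cite[Theorem 2]{bakirtas2023database}, histogram-based repetition detection handled by Lemma~\ref{lem:histogram}, a typicality-conditioned per-row collision bound $p_X^{k}(x^{k})\le 2^{-k(H(X)-\epsilon)}$ in place of the loose unconditional (R\'enyi-2) pairwise bound, a union bound over the $m_n-1$ competing rows, and LLN concentration of the number of retained columns around $(1-\delta)n$. Your $\min(1,\cdot)$ truncation plus typical/atypical split is just a cosmetic repackaging of the paper's device of conditioning on $\bar{X}_1^k\in A_\epsilon^{(k)}(X)$ and absorbing the atypicality probability into the additive $\epsilon$ term, so no substantive difference remains.
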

\begin{proof}
    See Appendix~\ref{proof: noiseless}.
\end{proof}
Observe that Theorem~\ref{thm: noiseless} states that the repetition distribution $p_S$ appears in the matching capacity only through the deletion probability $\delta$ in the no-obfuscation setting.

\begin{algorithm}[H]
\caption{Distribution-Agnostic De-Anonymization In No-Obfuscation Setting}\label{alg:noiselessmatching}
\Input{$(\mathbf{X},\mathbf{Y})$}
\Output{$\hat{\sigma}_n$}
$(\mathbf{H}^{(1)},\mathbf{H}^{(2)})\gets$ ColHist($\mathbf{X},\mathbf{Y}$)\Comment*[r]{Eq.~\eqref{eq:hist1}-\eqref{eq:hist2}.}
\blue{/* Histogram-based repetition detection */}\\
\For{$i=1$ \KwTo \textup{columnSize(${\mathbf{H}}^{(1)}$)}}{
count$\gets 0$\;
\For{$j=1$ \KwTo \textup{columnSize(${\mathbf{H}}^{(2)}$)}}{
\If{${\mathbf{H}}^{(2)}[:][j]={\mathbf{H}}^{(1)}[:][i]$}{
count$\gets$ count + 1\;
}
}
$\hat{S}[i]\gets $ count\;
}
$(\bar{\mathbf{X}},\bar{\mathbf{Y}})\gets \textup{ColumnDiscard}(\mathbf{X},\mathbf{Y},\hat{S})$

\blue{/* Exact sequence matching */}\\
\For{$i = 1$ \KwTo  \textup{rowSize($\mathbf{X}$)}}{
\textup{count}$\gets 0$\;
\For{$j = 1$ \KwTo  \textup{rowSize($\mathbf{Y}$)}}{
  \If{$\bar{\mathbf{X}}[i][:]=\bar{\mathbf{Y}}[j][:]$}{
    $\hat{\sigma}_n[i] \gets$ j\;
    count$\gets$ count + 1\;
  } 
}
\If{\textup{count} $\neq 1$}{
$\hat{\sigma}_n[i] \gets 0$\Comment*[r]{Matching error.}
}
}
\end{algorithm}

\newpage

\section{Non-Asymptotic Regime}
\label{sec:experiments}
In Section~\ref{sec:mainresult}, we devised various detection and de-anonymization algorithms and proved their performances (in terms of error probabilities and matching capacity) in the asymptotic regime where the column size $n$ grows to infinity. In this section, through extensive experiments, we will study the performances of these algorithms when $n$ is finite.

Throughout this section, unless stated otherwise, we will focus on the following simulation configuration:
\begin{itemize}
    \item \emph{Database size $(m_n,n)$:} We choose column size $n=100$ with the exception of the evaluation of the de-anonymization scheme in Section~\ref{subsec:typicalityexp} where we take $n=25$. We evaluate the performance of our algorithms for a range of values of row sizes $m_n$.
    \item \emph{Database distribution $p_X$:} Without loss of generality, we focus on the uniform distribution $p_X(x)=\frac{1}{|\mathcal{X}|}, \forall x\in\mathcal{X}$ where we arbitrarily take the alphabet size $|\mathcal{X}|=5$. In Section~\ref{subsec:HBDexp}, we will sweep over $|\mathcal{X}|$.
    \item \emph{Obfuscation $p_{Y|X}$:} We will consider the $|\mathcal{X}|$-ary symmetric channel  with a crossover probability $\epsilon$. More formally, for any $(x,y)\in\mathcal{X}^2$, we have
    \begin{align}
        p_{Y|X}(y|x) &=\begin{cases}
            1-\epsilon,&\text{if }y=x\\
            \frac{\epsilon}{|\mathcal{X}|-1},& \text{if }y\neq x
        \end{cases}
    \end{align}
    where we simulate with a range of values of $\epsilon$.
    \item \emph{Repetition distribution $p_S$:} We consider a \say{deletion-duplication} model for the repetition distribution with
    \begin{align}
        p_S(s) &= \begin{cases}
            \delta,&\text{if }s=0\\
            1-\delta-\gamma,& \text{if }s=1\\
            \gamma,& \text{if }s=2
        \end{cases}
    \end{align}
    where $\delta = 0.3$ and $\gamma =0.2$.
\end{itemize}
\subsection{Noisy Replica Detection}
\label{subsec:NRDexp}
Figure~\ref{fig:NRD} demonstrates the relationship between the replica detection error $\kappa^{(1)}_n$ of Algorithm~\ref{alg:noisyreplicadetection} and the row size $m_n$. Note that the linear relation in the semi-logarithmic plot validates the exponential decay of $\kappa^{(1)}_n$ with $m_n$ stated in~\eqref{eq:replicadetectionlast}.

\begin{figure}[t]
\centerline{\includegraphics[width=0.65\textwidth,trim={0cm 6cm 1cm 6cm},clip]{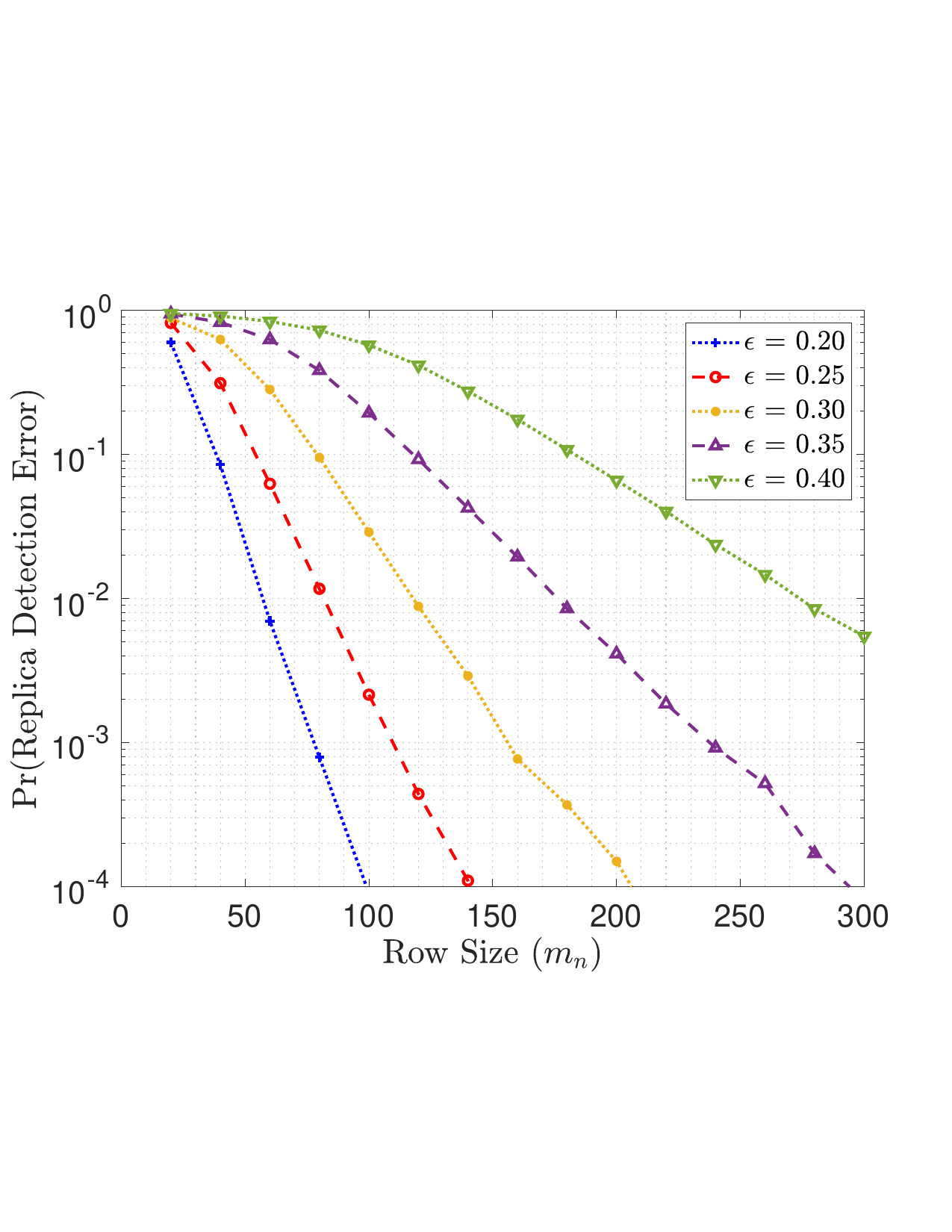}}
\caption{Probability of error of the noisy replica detection algorithm (Algorithm~\ref{alg:noisyreplicadetection}) $\kappa^{(1)}_n$ vs. the row size $m_n$ with $10^5$ trials. The y-axis is given in logarithmic scale to validate the exponential relation between the error probability and $m_n$ given in~\eqref{eq:replicadetectionlast}. Different curves correspond to different crossover probabilities $\epsilon$.}
\label{fig:NRD} 
\end{figure}

Figure~\ref{fig:NRD} shows that as the crossover probability $\epsilon$ increases, \emph{i.e.,} the labeled correlated database $\mathbf{Y}$ gets noisier, the error probability decay becomes slower, leading to a worse detection performance, as expected. This is due to the following: Our simulation configuration leads to the following Binomial parameters $p_0$ (Eq.~\eqref{eq:p0}) and $p_1$ (Eq.~\eqref{eq:p1}):
\begin{align}
    p_0 &= 1-\frac{1}{|\mathcal{X}|}\\
    p_1 &= 1-(1-\epsilon)^2-\frac{\epsilon^2}{|\mathcal{X}|-1}
\end{align}
Consequently,
\begin{align}
    p_0-p_1 &= (1-\epsilon)^2+\frac{\epsilon^2}{|\mathcal{X}|-1}-\frac{1}{|\mathcal{X}|},
\end{align}
and $D(\tau\|p_0)$ and $D(1-\tau\|1-p_1)$ are decreasing functions of $\epsilon$ for any $\epsilon<1-\frac{1}{|\mathcal{X}|}$. Thus, we have a worse replica detection performance for a higher crossover probability $\epsilon$.

\subsection{Seeded Deletion Detection}
\label{subsec:SDDexp}
While our replica detection algorithm (Algorithm~\ref{alg:noisyreplicadetection}) works for a small column size as shown in Figure~\ref{fig:NRD}, our seeded deletion detection algorithm (Algorithm~\ref{alg:deletiondetection}) requires large column sizes. This is because the threshold $\hat{\tau}_n$ of Algorithm~\ref{alg:deletiondetection} is chosen based on its asymptotic properties and does not take estimated parameters into account, unlike the threshold $m_n\tau$ of Algorithm~\ref{alg:noisyreplicadetection}. Hence, while the outliers still exist in $\mathbf{L}(\Phi)$ (Eq.~\eqref{eq:crosshammingdistremapped}) for any useful remapping $\Phi$, the threshold becomes too large to distinguish the outlier from the rest of the entries.

\begin{figure}[t]
\centerline{\includegraphics[width=0.65\textwidth,trim={0cm 6cm 1cm 6cm},clip]{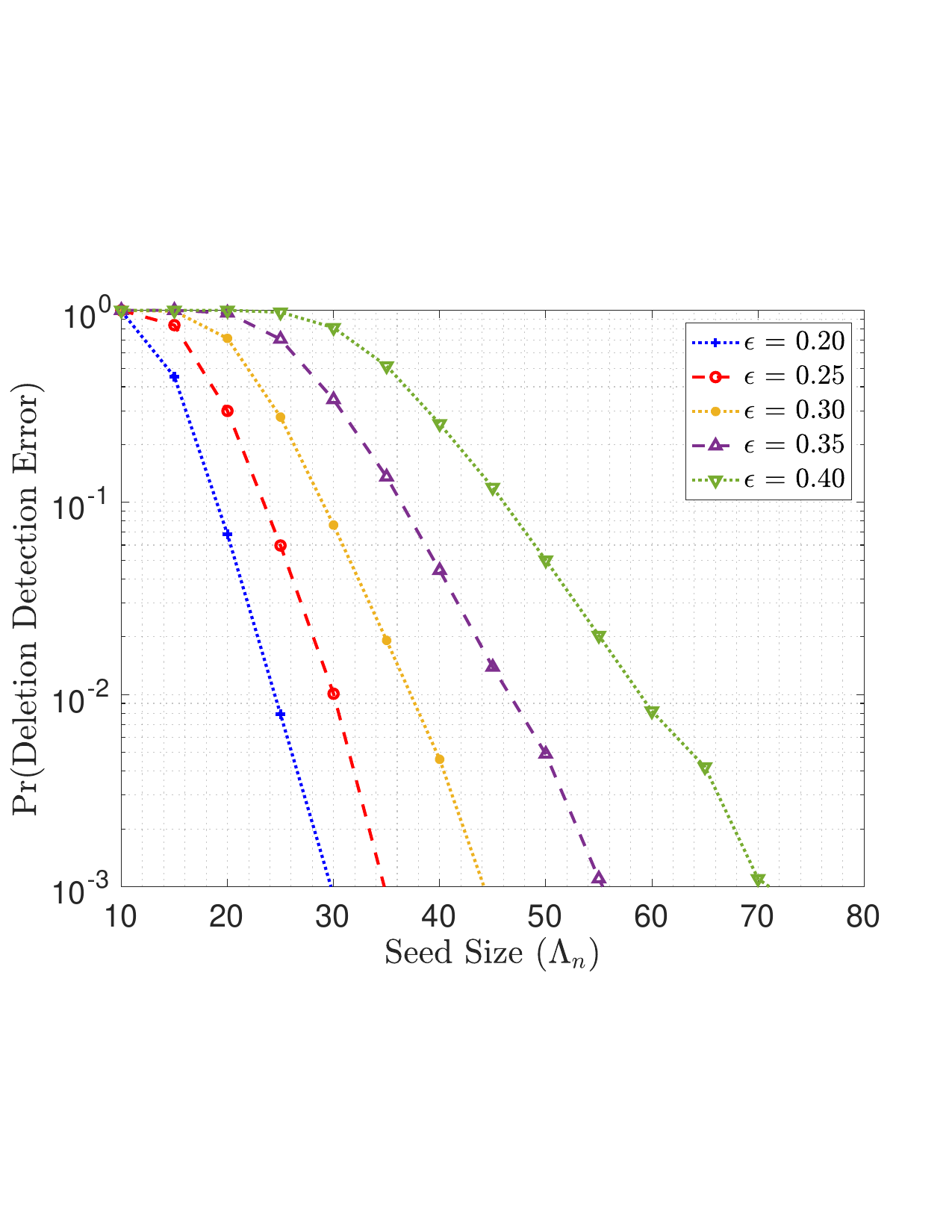}}
\caption{Probability of error of the modified seeded deletion detection algorithm (Algorithm~\ref{alg:modifieddeletiondetection}) vs. the seed size $\Lambda_n$ with $\Tilde{\tau}=1.5$ and $10^4$ trials. The y-axis is given in the logarithmic domain to validate the exponential relation between the deletion detection error probability and $\Lambda_n$ given by Lemma~\ref{lem: deletion detection}. Different curves correspond to different crossover probabilities $\epsilon$.}
\label{fig:SDD} 
\end{figure}

To overcome this problem, we propose a modification to Algorithm~\ref{alg:deletiondetection} based on the following observation: As shown for a family of distributions in~\cite{balakrishnan2008asymptotic}, the largest two order statistics of a sequence of $n$ \emph{i.i.d.} random variables typically have close values for large $n$. More formally, let $T_{(1),j}\le \dots\le T_{(n),j}$ be the order statistics of the $j$\textsuperscript{th} column of $\mathbf{M}(\Phi)$ (Eq.~\eqref{eq:absdeviations}) and define
\begin{align}
    R_{(i),j} &\triangleq \frac{T_{(i),j}}{T_{(i-1),j}},\hspace{1em} 2\le i\le n. \label{eq:ratioorderstats}
\end{align}
Then, for large $n$, we expect to have $T_{(n),j} = M_{r_j,j}(\Phi)$, where $r_j$ is the index of the $j$\textsuperscript{th} retained column (Eq.~\eqref{eq:rjdefn}), and
\begin{align}
    \lim\limits_{n\to\infty} R_{(n),j} &= \infty\\
    \lim\limits_{n\to\infty} R_{(n-1),j} &= 1
\end{align}
Based on this, we propose the following modification: First, we compute the sample means $\bar{R}_{(n)}$ and $\bar{R}_{(n-1)}$ of $\smash{\{R_{(n),j}\}_{j=1}^{\hat{K}_n}}$ and $\smash{\{R_{(n-1),j}\}_{j=1}^{\hat{K}_n}}$, respectively. Then, the algorithm declares $\Phi$ to be useful if 
\begin{align}
    \frac{\bar{R}_{(n)}}{\bar{R}_{(n-1)}}&\ge\Tilde{\tau}
\end{align}
for a threshold $\Tilde{\tau}$ above 1. We note that while any finite value of $\Tilde{\tau}$ that is larger than and bounded away from $1$ works in the asymptotic regime where $n\to\infty$, in the small column size regime, for each value of $n$, $\Tilde{\tau}$ is a heuristically chosen threshold slightly larger than 1, \emph{e.g.} $\Tilde{\tau} = 1.5$ for $n=100$.

After deciding that $\Phi$ is useful, the $j$\textsuperscript{th} retained column is inferred as:
\begin{align}
    \hat{I}_R(j) &= \argmax\limits_{i\in[n]} M_{i,j}(\Phi)
\end{align}
The resulting modified deletion detection algorithm is given in Algorithm~\ref{alg:modifieddeletiondetection}. From Figure~\ref{fig:SDD}, it can be seen that Algorithm~\ref{alg:modifieddeletiondetection} performs well for a small column size.

\begin{algorithm}[t]
\caption{Modified Distribution-Agnostic Seeded Deletion Detection Algorithm}\label{alg:modifieddeletiondetection}
\Input{$(\mathbf{G}^{(1)},\mathbf{G}^{(2)},\Tilde{\tau})$}
\Output{$\hat{I}_R$}
$\mathcal{S}(\mathcal{X})\gets$ SymmetryGroup($\mathcal{X}$)\;

\For{$s \gets 1$ \KwTo $|\mathcal{X}|!$}{
$\hat{I}_R \gets\varnothing$\;
$\Phi\gets\mathcal{S}(\mathcal{X})[s]$\Comment*[r]{Pick a remapping.} 
$\mathbf{L}(\Phi) \gets $ HammDist($\mathbf{G}^{(1)},\mathbf{G}^{(2)}(\Phi)$)\Comment*[r]{Eq. \eqref{eq:crosshammingdist}}
$\mu(\Phi)\gets$ SampleMean($\mathbf{L}(\Phi) $)\Comment*[r]{Eq.~\eqref{eq:samplemean}}
$\mathbf{M}(\Phi) \gets |\mathbf{L}(\Phi)-\mu(\Phi)|$\Comment*[r]{Eq. \eqref{eq:absdeviations}}
$\mathbf{R} \gets \textup{OrderStatisticRatios(}\mathbf{M}(\Phi) \textup{)}$\Comment*[r]{Eq. \eqref{eq:ratioorderstats}}
\eIf{\textup{mean(}$\mathbf{R}$\textup{[n][:])}$<\Tilde{\tau}$ \textup{mean(}$\mathbf{R}$\textup{[n-1][:])}}{
\eIf{$s\neq|\mathcal{X}|!$}{Skip to next $\Phi$\Comment*[r]{$\Phi$ is useless.}}{\Return ERROR \Comment*[r]{No useful $\Phi$ found.}}
}{
\For{$j\gets1$ \KwTo $\Tilde{K}_n$}{
$\hat{I}_R \gets \hat{I}_R \cup \{\argmax\limits_{i\in[n]} \mathbf{M}(\Phi)[i][j]\} $\;
}
\Return $\hat{I}_R$\;
}
}

\end{algorithm}

Comparing the different curves in Figure~\ref{fig:SDD}, we can conclude that as the crossover probability $\epsilon$ increases, the decay of the error probability becomes slower, indicating a worse detection performance. In turn, to achieve the same deletion detection error rate, we need a higher number of seeds in scenarios with a higher level of obfuscation.

\subsection{De-Anonymization Scheme}
\label{subsec:typicalityexp}
In this subsection, we evaluate the performance of the de-anonymization scheme (Algorithm~\ref{alg:matchingscheme}) proposed in Section~\ref{subsec:achievability}. While doing so, we decrease the column size from $n=100$ to $n=25$ due to the following considerations:
\begin{enumerate}[label=\textbf{(\roman*)}]
    \item As discussed in Section~\ref{subsec:achievability}, Algorithm~\ref{alg:matchingscheme} has a runtime of $\mathcal{O}(m_n^2 n)$. 
    \item As in the proof of Theorem~\ref{thm:mainresult} (Eq.~\eqref{eq:error}), the matching error probability decreases exponentially in the column size $n$. Hence to observe a non-trivial (non-zero) matching error, we must consider a row size $m_n$ exponential in $n$ (Definition~\ref{defn:dbgrowthrate}). This suggests, simulating Algorithm~\ref{alg:matchingscheme} in the non-trivial regime becomes computationally prohibitive, even for $n=100$.
\end{enumerate}

In addition to the modified deletion detection algorithm (Algorithm~\ref{alg:modifieddeletiondetection}), we make a slight modification to Algorithm~\ref{alg:matchingscheme}, described below to accommodate the small column size. After performing replica and deletion detection, and estimating $\hat{p}_{X,Y^S|S}$, instead of fixing an $\epsilon>0$ (that depends on $n$) and checking the $\epsilon$-joint-typicality of the row pairs from $\mathbf{X}$ and $\tilde{\mathbf{Y}}$, in Algorithm~\ref{alg:modifiedmatchingscheme}, we do the following:
\begin{enumerate}[label = \textbf{(\roman*)}]
    \item For each $(i,j)\in[m_n]^2$, compute
    \begin{align}
        \Delta_{i,j}&\triangleq \left|\hat{H}(X,Y^S|S)-\hat{H}_{i,j}(X,Y^S|S)\right|\label{eq:Deltaij}
    \end{align}
    where
    \begin{align}
        \hat{H}_{i,j}(X,Y^S|S)&\triangleq -\frac{1}{n} \log\hat{p}_{X^n,Y^{K_n}|S^n}(X^{n}_i,Y^{K_n}_j|\hat{S}^n)
    \end{align}
    \item For each $j\in[m_n]$, match the $j$\textsuperscript{th} row $Y^{K_n}_j$ of $\tilde{\mathbf{Y}}$ with the $l$\textsuperscript{th} row $X^n_l$ of {$\mathbf{X}$}, if
    \begin{align}
        l &= \argmin\limits_{i\in[m_n]} \Delta_{i,j}\label{eq:modifieddeltas}
    \end{align}
    assigning $\hat\sigma_n(l)=j$.
    \item If there exist distinct $j_1$, $j_2$ such that $l$ satisfies~\eqref{eq:modifieddeltas}, declare error and assign $\hat{\sigma}_n(l)=0$.
\end{enumerate}

We note that testing $\Delta_{i,j}$ against a threshold $\epsilon$ would correspond to the joint-typicality check as done in Algorithm~\ref{alg:matchingscheme}.

The experimental evaluation of the performance of Algorithm~\ref{alg:modifiedmatchingscheme} is given in Figure~\ref{fig:typicalitymatching}. As expected, we observe that a higher row size $m_n$ leads to an increased matching error. 

\begin{algorithm}[t]
\caption{Modified Distribution-Agnostic De-Anonymization Scheme}\label{alg:modifiedmatchingscheme}
\Input{$(\mathbf{X},\mathbf{Y},\mathbf{G}^{(1)},\mathbf{G}^{(2)})$}
\Output{$\hat{\sigma}_n$}
\textup{isReplica}$\gets$ Alg.\ref{alg:noisyreplicadetection}($\mathbf{Y}$)\;
$\mathbf{G}^{(2)}\gets$ removeExtraReplicas($\mathbf{G}^{(2)}$,isReplica)\;
\textup{isDeleted}$\gets$ Alg.\ref{alg:modifieddeletiondetection}($\mathbf{G}^{(1)},\mathbf{G}^{(2)}$,$\Tilde{\tau}$)\;
$\hat{S}^n \gets $ EstimateRepetitionPattern(isReplica,isDeleted)\;
$\hat{p}_{X,Y^S|S}\gets $ EstimateDist($\mathbf{G}^{(1)},\mathbf{G}^{(2)},\hat{S}^n$)\;
$\tilde{\mathbf{Y}}\gets$ MarkerAddition($\mathbf{Y},\hat{S}^n$)\;
\blue{/* Computing $\Delta_{i,j}$ Eq. \eqref{eq:Deltaij}*/}\\
$\mathbf{\Delta}\gets$ computeDeltas($\mathbf{X},\mathbf{Y},\hat{S}^n,\hat{p}_{X,Y^S|S}$)\;

\For{$i = 1$ \KwTo  \textup{rowSize($\mathbf{X}$)}}{
\textup{count}$\gets 0$\;
\For{$j = 1$ \KwTo  \textup{rowSize($\tilde{\mathbf{Y}}$)}}{
  \If{$i=\argmin\limits_{l\in[m_n]} \Delta_{l,j}$}{
    $\hat{\sigma}_n[i] \gets$ j\;
    count$\gets$ count + 1\;
  }
  
}
\If{\textup{count} $\neq 1$}{
$\hat{\sigma}_n[i] \gets 0$\Comment*[r]{Matching error.}
}
}

\end{algorithm}

\begin{figure}[t]
\centerline{\includegraphics[width=0.65\textwidth,trim={0cm 6cm 1cm 6cm},clip]{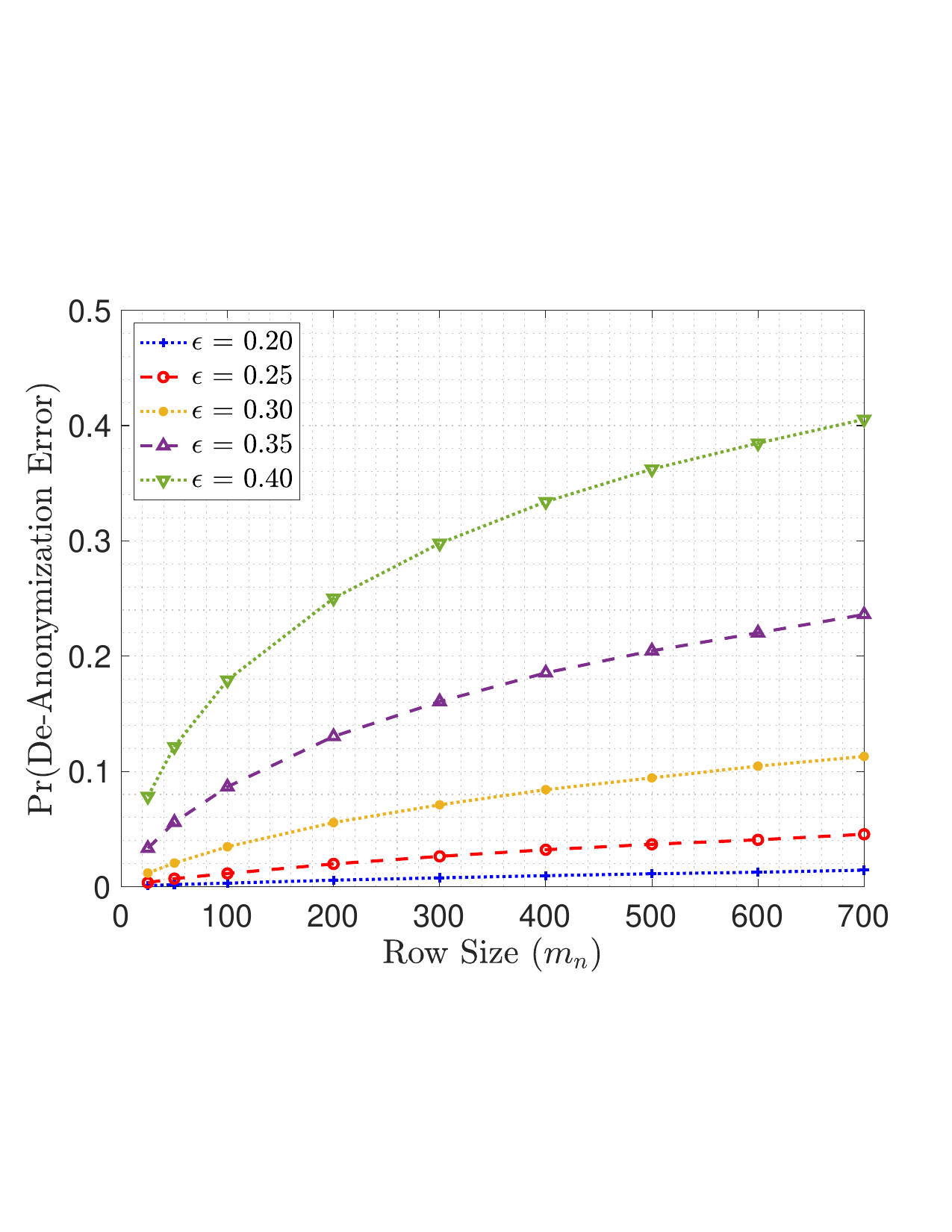}}
\caption{Probability of de-anonymization error of the modified de-anonymization scheme (Algorithm~\ref{alg:modifiedmatchingscheme}) vs. the row size $m_n$ with $n=25$, $\Lambda_n=25$ and $10^4$ trials. Different curves correspond to different crossover probabilities $\epsilon$.}
\label{fig:typicalitymatching} 
\end{figure}

Similar to Figures~\ref{fig:NRD} and \ref{fig:SDD}, Figure~\ref{fig:typicalitymatching} demonstrates the impact of the obfuscation on the de-anonymization performance. As the databases become more obfuscated, the de-anonymization performance degrades potentially an order of magnitude. Hence, we can conclude that the amount of obfuscation plays a crucial role in preserving privacy in databases.

\subsection{Histogram-Based Repetition Detection}
\label{subsec:HBDexp}

In this subsection, we evaluate the performance of the histogram-based repetition detection algorithm (Algorithm~\ref{alg:noiselessmatching}) of Section~\ref{subsec:noiseless} in the no-obfuscation setting. 

For uniform database distribution $X\sim$ Unif($[\mathcal{X}]$), we can obtain the following closed-form asymptotic expression for the repetition detection error probability of Algorithm~\ref{alg:noiselessmatching}:
\begin{prop}{\textbf{\cite[Propositon~4]{bakirtas2023database}}}
\label{prop:histogramuniform}
    Let $\xi_n$ denote the probability of the column histograms of $\mathbf{X}$ not being unique. If $X\sim\text{Unif}(\mathcal{X})$, then 
\begin{align}
    \xi_n=n^2 m_n^{\frac{1-|\mathcal{X}|}{2}} \left(4\pi \right)^{\frac{1-|\mathcal{X}|}{2}} |\mathcal{X}|^{\frac{|\mathcal{X}|}{2}} (1+o_{m_n}(1))(1-o_n(1))
\end{align}
\end{prop}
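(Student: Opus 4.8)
The plan is to recognize $\xi_n$ as a birthday-type collision probability and to reduce it to a single pairwise collision probability. Since the columns of $\mathbf{X}$ are i.i.d. and $X\sim\mathrm{Unif}(\mathcal{X})$, the histogram $H^{(1)}_j=(H^{(1)}_{1,j},\dots,H^{(1)}_{|\mathcal{X}|,j})$ of the $j$\textsuperscript{th} column is multinomially distributed with $m_n$ trials and uniform cell probabilities $1/|\mathcal{X}|$, and the $n$ column histograms are themselves i.i.d. First I would write
\[
\xi_n=\Pr\Big(\bigcup_{i<j}\{H^{(1)}_i=H^{(1)}_j\}\Big),
\]
upper-bound it by the union bound $\binom{n}{2}\,P_{\mathrm{coll}}$ and lower-bound it by a Bonferroni/second-moment inequality, where $P_{\mathrm{coll}}\triangleq\Pr(H^{(1)}_1=H^{(1)}_2)$ is the collision probability of two independent histograms. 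Because the growth condition of Lemma~\ref{lem:histogram} forces $\binom{n}{2}P_{\mathrm{coll}}\to 0$, the doubly-counted (two-or-more-collision) terms are of smaller order and the two bounds agree to leading order, giving $\xi_n=\binom{n}{2}P_{\mathrm{coll}}\,(1-o_n(1))$.

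The heart of the argument is the evaluation of $P_{\mathrm{coll}}=\sum_h \Pr(H^{(1)}_1=h)^2$, the sum running over the lattice $\{h\in\mathbb{Z}_{\ge0}^{|\mathcal{X}|}:\sum_i h_i=m_n\}$. I would compute this via a multidimensional local central limit theorem: dropping the redundant last coordinate, the count vector $(H_1,\dots,H_{|\mathcal{X}|-1})$ is, for large $m_n$, well approximated by a $(|\mathcal{X}|-1)$-dimensional Gaussian with covariance $\Sigma=m_n\big(\tfrac{1}{|\mathcal{X}|}I-\tfrac{1}{|\mathcal{X}|^2}J\big)$. The projected histograms form the standard integer lattice (one point per unit cell), so the collision sum is a Riemann sum for the Gaussian self-overlap integral,
\[
P_{\mathrm{coll}}\approx \int_{\mathbb{R}^{|\mathcal{X}|-1}} f(x)^2\,dx=\frac{1}{(4\pi)^{(|\mathcal{X}|-1)/2}(\det\Sigma)^{1/2}},
\]
where $f$ is the Gaussian density. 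Using the rank-one determinant identity $\det(D-pp^\top)=\det(D)\,(1-p^\top D^{-1}p)$ with $D=\tfrac{1}{|\mathcal{X}|}I$ and $p=\tfrac{1}{|\mathcal{X}|}\mathbf{1}$ gives $\det\Sigma=m_n^{|\mathcal{X}|-1}|\mathcal{X}|^{-|\mathcal{X}|}$, and substitution yields $P_{\mathrm{coll}}=(4\pi)^{(1-|\mathcal{X}|)/2}m_n^{(1-|\mathcal{X}|)/2}|\mathcal{X}|^{|\mathcal{X}|/2}\,(1+o_{m_n}(1))$. Combining this with the reduction of the first paragraph produces the stated asymptotic form.

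The hard part will be making the local-CLT step rigorous: the approximation $\Pr(H^{(1)}_1=h)\approx f(h)$ must hold uniformly enough (via a local limit theorem with an explicit error term, e.g.\ through the characteristic-function/Fourier method) that the discretization and approximation errors, once squared and summed over the $\Theta(m_n^{(|\mathcal{X}|-1)/2})$ lattice points carrying non-negligible mass, are of smaller order than $P_{\mathrm{coll}}$ itself; this is precisely what controls the $(1+o_{m_n}(1))$ factor, and some care is needed to dominate the multinomial tails. A secondary but routine point is the matching lower bound: I would apply the second-moment method to the number of colliding pairs to show that the probability of two or more simultaneous collisions is $O\big((n^2P_{\mathrm{coll}})^2\big)=o(n^2P_{\mathrm{coll}})$ under $m_n=\omega(n^{4/(|\mathcal{X}|-1)})$, so the union bound is tight and the $(1-o_n(1))$ factor is justified.
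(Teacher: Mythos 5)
You should first note that this paper never proves Proposition~\ref{prop:histogramuniform}: it is imported verbatim from \cite[Proposition~4]{bakirtas2023database}, so your proposal can only be compared against that (standard) derivation, and in all essentials it follows the same route: reduce the non-uniqueness event to pairwise collisions, evaluate $P_{\mathrm{coll}}=\sum_h \Pr(H^{(1)}_1=h)^2$ via a multivariate local CLT for the uniform multinomial, and compute the Gaussian self-overlap integral. Your constants check out: with $d=|\mathcal{X}|-1$ one has $\int f^2=(4\pi)^{-d/2}(\det\Sigma)^{-1/2}$, the rank-one determinant identity gives $1-p^\top D^{-1}p=1/|\mathcal{X}|$ and hence $\det\Sigma=m_n^{|\mathcal{X}|-1}|\mathcal{X}|^{-|\mathcal{X}|}$, so $P_{\mathrm{coll}}=(4\pi m_n)^{(1-|\mathcal{X}|)/2}|\mathcal{X}|^{|\mathcal{X}|/2}\,(1+o_{m_n}(1))$; a sanity check at $|\mathcal{X}|=2$, where $P_{\mathrm{coll}}=\binom{2m_n}{m_n}4^{-m_n}\sim(\pi m_n)^{-1/2}$, confirms the formula. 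You also correctly identify the genuine technical crux, namely a local limit theorem with an error term uniform enough to survive squaring and summing over the $\Theta(m_n^{(|\mathcal{X}|-1)/2})$ lattice points carrying the bulk of the mass.

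Two bookkeeping points deserve attention. First, your assembly yields $\xi_n=\binom{n}{2}P_{\mathrm{coll}}(1-o_n(1))=\frac{n^2}{2}P_{\mathrm{coll}}(1-o_n(1))$, which is \emph{half} the stated $n^2P_{\mathrm{coll}}$; a constant factor cannot be absorbed into $(1-o_n(1))$. The $n^2$ in the display corresponds to the ordered-pair count $n(n-1)=n^2(1-o_n(1))$, i.e., to a union bound that does not divide by two, so you should either adopt that convention explicitly or flag the factor-2 discrepancy with the statement as written. Note that the only use this paper makes of the proposition --- the slope $\frac{1-|\mathcal{X}|}{2}$ and the linearity in $\log m_n$ asserted in \eqref{eq:HBDlinear} and verified in Figure~\ref{fig:HBD} --- is insensitive to this constant. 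Second, in your Bonferroni/second-moment step, pairs of events sharing a column, such as $\{H^{(1)}_i=H^{(1)}_j\}$ and $\{H^{(1)}_j=H^{(1)}_k\}$, are not independent, so the correction term is not literally a sum of products $P_{\mathrm{coll}}^2$. You need the separate triple-collision bound $\Pr(H^{(1)}_i=H^{(1)}_j=H^{(1)}_k)=\sum_h p_h^3\le\big(\max_h p_h\big)P_{\mathrm{coll}}=O\big(m_n^{(1-|\mathcal{X}|)/2}\big)P_{\mathrm{coll}}$, whose total contribution $O\big(n^3m_n^{1-|\mathcal{X}|}\big)$ is dominated by $n^2P_{\mathrm{coll}}$ under $m_n=\omega\big(n^{4/(|\mathcal{X}|-1)}\big)$ (and trivially in this paper's regime where $m_n$ is exponential in $n$). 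Your conclusion therefore stands, but the overlapping-pair term must be handled explicitly rather than folded into $O\big((n^2P_{\mathrm{coll}})^2\big)$ by an implicit independence assumption.
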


Proposition~\ref{prop:histogramuniform} states that for fixed $n$ and $|\mathcal{X}|$, $\xi_n\propto m_n^{\frac{1-|\mathcal{X}|}{2}}$, \emph{i.e.,} $\log \xi_n$ is linear with $\log m_n$ for large $n$ and $m_n$. More formally, 
\begin{align}
    \log \xi_n &= \frac{1-|\mathcal{X}|}{2} \log m_n + C_{|\mathcal{X}|} + 2 \log n + \zeta_{m_n,n}\label{eq:HBDlinear}
\end{align}
where 
\begin{align}
    C_{|\mathcal{X}|} &\triangleq \frac{1-|\mathcal{X}|}{2}\log 4\pi + \frac{|\mathcal{X}|}{2}\log |\mathcal{X}|
\end{align}
and $\zeta_{m_n,n}\to 0$ as $m_n,n\to\infty$. We note that \eqref{eq:HBDlinear} implies that the order relation given in Lemma~\ref{lem:histogram} is tight.

\begin{figure}[t]
\centerline{\includegraphics[width=0.65\textwidth,trim={0cm 6cm 1cm 6cm},clip]{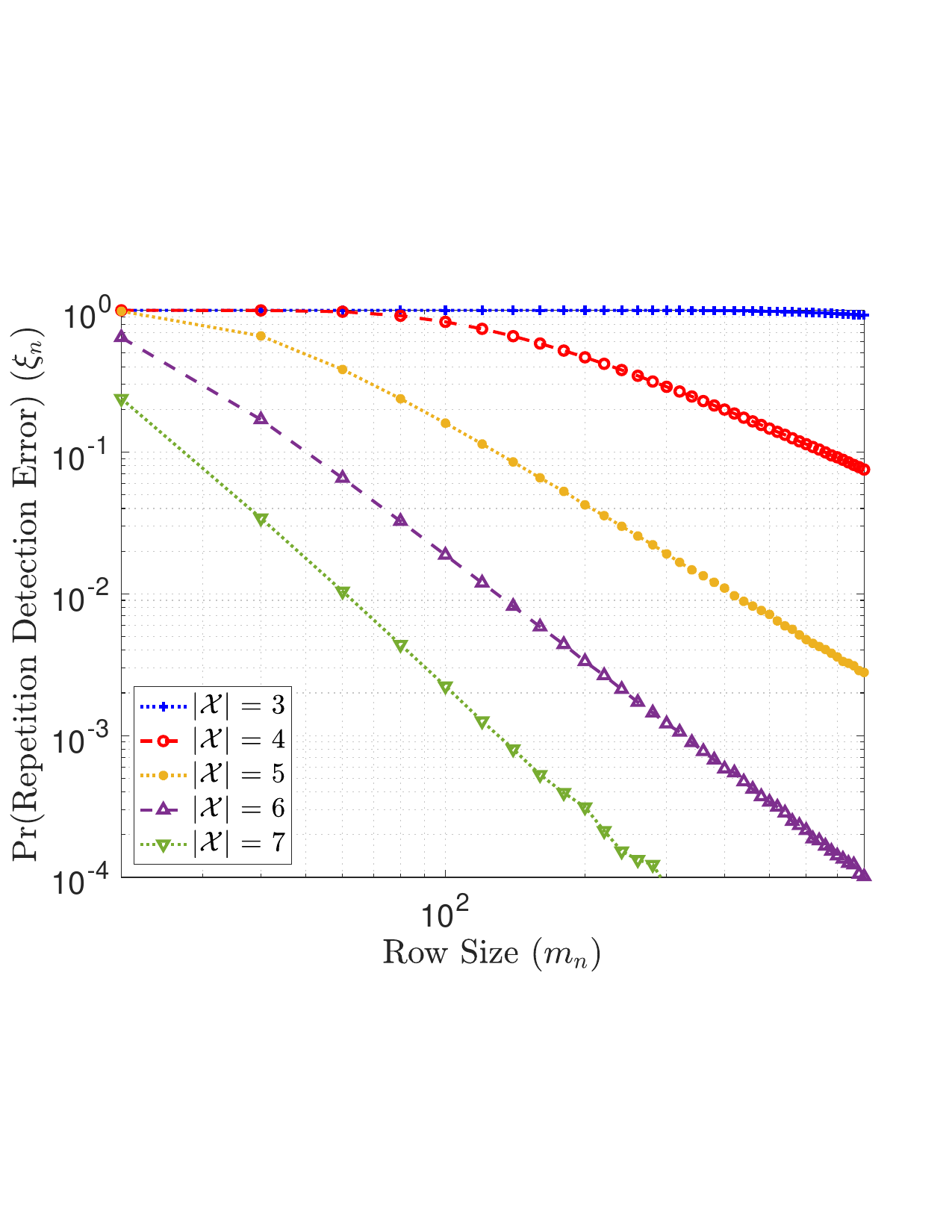}}
\caption{Probability of error of the histogram-based repetition detection~(Section~\ref{subsec:noiseless}) vs. the row size $m_n$ with $10^6$ trials. Both axes are given in the logarithmic domain to validate the linear relation given in~\eqref{eq:HBDlinear}.}
\label{fig:HBD} 
\end{figure}

From Figure~\ref{fig:HBD},  one can see that after some $m_n$, $\log \xi_n$ decays linearly with $\log m_n$ (excluding $|\mathcal{X}|=3$) with respective approximate slopes of $-1.40,-1.97,-2.51,-2.97$ demonstrating that~\eqref{eq:HBDlinear}, which predicts respective asymptotic slopes of $-1.50,-2,-2.50,-3$, holds even when $n=100$. Note the value of $m_n$ where $\xi_n$ starts decaying becomes smaller with increasing alphabet size $|\mathcal{X}|$. 

In Figure~\ref{fig:HBD}, as the alphabet $\mathcal{X}$ gets larger, the decay of the error probability is steeper. This is inherent to the nature of the histograms as in Proposition~\ref{prop:histogramuniform} and not an artifact of the simulation configuration chosen.

\section{Conclusion}
\label{sec:conclusion}
In this paper, we have investigated the distribution-agnostic database de-anonymization problem under synchronization errors, in the form of column repetitions, and obfuscation, in the form of noise on the retained entries of the database. We showed that through either modifications of the existing algorithms for the distribution-agnostic setting or novel ones, we can correctly detect the repetition pattern and perform de-anonymization, in both noisy/obfuscated and no-obfuscation settings. Interestingly, our results show that in terms of matching capacity, there is no penalty for not knowing the underlying distributions asymptotically. Our experimental results illustrate that our proposed algorithms or their slightly modified versions work in the non-asymptotic regime with small database sizes as well. Overall, our work provides insights into the practicality of distribution-agnostic database de-anonymization algorithms with theoretical guarantees.

\bibliography{references}

\begin{thebibliography}{10}
\providecommand{\url}[1]{#1}
\csname url@samestyle\endcsname
\providecommand{\newblock}{\relax}
\providecommand{\bibinfo}[2]{#2}
\providecommand{\BIBentrySTDinterwordspacing}{\spaceskip=0pt\relax}
\providecommand{\BIBentryALTinterwordstretchfactor}{4}
\providecommand{\BIBentryALTinterwordspacing}{\spaceskip=\fontdimen2\font plus
\BIBentryALTinterwordstretchfactor\fontdimen3\font minus \fontdimen4\font\relax}
\providecommand{\BIBforeignlanguage}[2]{{%
\expandafter\ifx\csname l@#1\endcsname\relax
\typeout{** WARNING: IEEEtran.bst: No hyphenation pattern has been}%
\typeout{** loaded for the language `#1'. Using the pattern for}%
\typeout{** the default language instead.}%
\else
\language=\csname l@#1\endcsname
\fi
#2}}
\providecommand{\BIBdecl}{\relax}
\BIBdecl

\bibitem{ohm2009broken}
P.~Ohm, ``Broken promises of privacy: Responding to the surprising failure of anonymization,'' \emph{UCLA L. Rev.}, vol.~57, p. 1701, 2009.

\bibitem{naini2015you}
F.~M. {Naini}, J.~{Unnikrishnan}, P.~{Thiran}, and M.~{Vetterli}, ``Where you are is who you are: User identification by matching statistics,'' \emph{IEEE Trans. Inf. Forensics Security}, vol.~11, no.~2, pp. 358--372, 2016.

\bibitem{datta2012provable}
A.~Datta, D.~Sharma, and A.~Sinha, ``Provable de-anonymization of large datasets with sparse dimensions,'' in \emph{International Conference on Principles of Security and Trust}.\hskip 1em plus 0.5em minus 0.4em\relax Springer, 2012, pp. 229--248.

\bibitem{narayanan2008robust}
A.~{Narayanan} and V.~{Shmatikov}, ``Robust de-anonymization of large sparse datasets,'' in \emph{Proc. of IEEE Symposium on Security and Privacy}, 2008, pp. 111--125.

\bibitem{sweeney1997weaving}
L.~Sweeney, ``Weaving technology and policy together to maintain confidentiality,'' \emph{The Journal of Law, Medicine \& Ethics}, vol.~25, no. 2-3, pp. 98--110, 1997.

\bibitem{takbiri2018matching}
N.~Takbiri, A.~Houmansadr, D.~L. Goeckel, and H.~Pishro-Nik, ``Matching anonymized and obfuscated time series to users’ profiles,'' \emph{IEEE Transactions on Information Theory}, vol.~65, no.~2, pp. 724--741, 2019.

\bibitem{cullina}
D.~{Cullina}, P.~{Mittal}, and N.~{Kiyavash}, ``Fundamental limits of database alignment,'' in \emph{Proc. of IEEE International Symposium on Information Theory (ISIT)}, 2018, pp. 651--655.

\bibitem{shirani8849392}
F.~{Shirani}, S.~{Garg}, and E.~{Erkip}, ``A concentration of measure approach to database de-anonymization,'' in \emph{Proc. of IEEE International Symposium on Information Theory (ISIT)}, 2019, pp. 2748--2752.

\bibitem{dai2019database}
O.~E. Dai, D.~Cullina, and N.~Kiyavash, ``{Database alignment with Gaussian features},'' in \emph{The 22nd International Conference on Artificial Intelligence and Statistics}.\hskip 1em plus 0.5em minus 0.4em\relax PMLR, 2019, pp. 3225--3233.

\bibitem{kunisky2022strong}
D.~Kunisky and J.~Niles-Weed, ``Strong recovery of geometric planted matchings,'' in \emph{Proc. of the 2022 Annual ACM-SIAM Symposium on Discrete Algorithms (SODA)}.\hskip 1em plus 0.5em minus 0.4em\relax SIAM, 2022, pp. 834--876.

\bibitem{tamir2023correlation}
R.~Tamir, ``{On correlation detection of Gaussian databases via local decision making},'' in \emph{2023 IEEE International Symposium on Information Theory (ISIT)}.\hskip 1em plus 0.5em minus 0.4em\relax IEEE, 2023, pp. 1231--1236.

\bibitem{bakirtas2021database}
S.~Bakirtas and E.~Erkip, ``Database matching under column deletions,'' in \emph{Proc. of IEEE International Symposium on Information Theory (ISIT)}, 2021, pp. 2720--2725.

\bibitem{bakirtas2022matching}
------, ``{Matching of Markov databases under random column repetitions},'' in \emph{2022 56th Asilomar Conference on Signals, Systems, and Computers}, 2022.

\bibitem{bakirtas2022seeded}
------, ``Seeded database matching under noisy column repetitions,'' in \emph{2022 IEEE Information Theory Workshop (ITW)}.\hskip 1em plus 0.5em minus 0.4em\relax IEEE, 2022, pp. 386--391.

\bibitem{bakirtas2023database}
------, ``Database matching under noisy synchronization errors,'' \emph{arXiv preprint arXiv:2301.06796}, 2023.

\bibitem{bakirtas2022database}
------, ``Database matching under adversarial column deletions,'' in \emph{2023 IEEE Information Theory Workshop (ITW)}.\hskip 1em plus 0.5em minus 0.4em\relax IEEE, 2023, pp. 181--185.

\bibitem{cormen2022introduction}
T.~H. Cormen, C.~E. Leiserson, R.~L. Rivest, and C.~Stein, \emph{{Introduction to Algorithms}}.\hskip 1em plus 0.5em minus 0.4em\relax MIT press, 2022.

\bibitem{cover2006elements}
T.~M. Cover, \emph{Elements of Information Theory}.\hskip 1em plus 0.5em minus 0.4em\relax John Wiley \& Sons, 2006.

\bibitem{shirani2021concentration}
F.~Shirani, S.~Garg, and E.~Erkip, ``A concentration of measure approach to correlated graph matching,'' \emph{IEEE Journal on Selected Areas in Information Theory}, vol.~2, no.~1, pp. 338--351, 2021.

\bibitem{morvai2005order}
G.~Morvai and B.~Weiss, ``{Order estimation of Markov chains},'' \emph{IEEE Transactions on Information Theory}, vol.~51, no.~4, pp. 1496--1497, 2005.

\bibitem{blischke1962moment}
W.~Blischke, ``{Moment estimators for the parameters of a mixture of two Binomial distributions},'' \emph{The Annals of Mathematical Statistics}, pp. 444--454, 1962.

\bibitem{moore2007basic}
D.~S. Moore and S.~Kirkland, \emph{The Basic Practice of Statistics}.\hskip 1em plus 0.5em minus 0.4em\relax WH Freeman New York, 2007, vol.~2.

\bibitem{anderson1957statistical}
T.~W. Anderson and L.~A. Goodman, ``{Statistical inference about Markov chains},'' \emph{The Annals of Mathematical Statistics}, pp. 89--110, 1957.

\bibitem{van2000asymptotic}
A.~W. Van~der Vaart, \emph{Asymptotic Statistics}.\hskip 1em plus 0.5em minus 0.4em\relax Cambridge University Press, 2000, vol.~3.

\bibitem{balakrishnan2008asymptotic}
N.~Balakrishnan and A.~Stepanov, ``Asymptotic properties of the ratio of order statistics,'' \emph{Statistics \& probability letters}, vol.~78, no.~3, pp. 301--310, 2008.

\bibitem{ash2012information}
R.~B. Ash, \emph{{Information Theory}}.\hskip 1em plus 0.5em minus 0.4em\relax Courier Corporation, 2012.

\bibitem{wasserman2004all}
L.~Wasserman, \emph{All of Statistics: A Concise Course in Statistical Inference}.\hskip 1em plus 0.5em minus 0.4em\relax Springer, 2004, vol.~26.

\end{thebibliography}
\bibliographystyle{IEEEtran}

\appendix
\subsection{Proof of Lemma~\ref{lem: replica detection}}
\label{proof: replica detection}
The estimator proposed in~\cite{blischke1962moment} works as follows: Define the $k$\textsuperscript{th} sample factorial moment $F_k$ as
\begin{align}
    F_k&\triangleq \frac{1}{K_n-1} \sum\limits_{j=1}^{K_n-1} \prod\limits_{i=0}^{k-1} \frac{W_j-i}{m_n-i},\hspace{1em} \forall k\in[m_n]\label{eq:factorialmoments}
\end{align}
and let 
\begin{align}
    U &\triangleq \frac{F_3-F_1 F_2}{F_2-F_1^2}
\end{align}
Then the respective estimators $\hat{p}_0$ and $\hat{p}_1$ for $p_0$ and $p_1$ can be constructed as:
\begin{align}
    \hat{p}_0 &= \frac{U+\sqrt{U^2-4 U F_1 + 4 F_2}}{2}\label{eq:paramest1}\\
    \hat{p}_1 &= \frac{U-\sqrt{U^2-4 U F_1 + 4 F_2}}{2}\label{eq:paramest2}
\end{align}
From~\cite{blischke1962moment}, we get $\hat{p}_i\overset{p}{\to}p_i$, $i=0,1$, and in turn $\tau\overset{p}{\to}\frac{U}{2}$. Thus for large $n$, $\tau$ is bounded away from $p_0$ and $p_1$. We complete the proof following the same steps taken in the proof of~\cite[Lemma~1]{bakirtas2023database}, which we provide below for the sake of completeness.

Let $A_j$ denote the event that $\smash{C^{(2)}_{j}}$ and $\smash{C^{(2)}_{j+1}}$ are noisy replicas and $B_j$ denote the event that the algorithm infers $\smash{C^{(2)}_{j}}$ and $\smash{C^{(2)}_{j+1}}$ as replicas. Via the union bound, we can upper bound the total probability of replica detection error $\kappa^{(1)}_n$ as
\begin{align}
    \kappa^{(1)}_n&= \Pr(\bigcup\limits_{j=1}^{{K_n}-1} E_j)\label{eq:kappa1defn}\\
    &\le \sum\limits_{j=1}^{{K_n}-1} \Pr(A_j ^c) \Pr(B_j|A_j ^c)+ \Pr(A_j)  \Pr(B_j^c|A_j)\label{eq:replicadetectionbound}
\end{align}

Observe that conditioned on $A_j^c$, $W_j\sim\text{Binom}\smash{(m_n,p_0)}$ and conditioned on $A_j$, $W_j\sim\text{Binom}(m_n,p_1)$. Then, from the Chernoff bound~\cite[Lemma 4.7.2]{ash2012information}, we get
\begin{align}
    \Pr(B_j|A_j ^c)&\le 2^{-m_n D\left(\tau\|\smash{p_0}\right)}\label{eq:chernoff1}\\
    \Pr(B_j^c|A_j)&\le 2^{-m_n D\left(1-\tau \|1-p_1\right)}\label{eq:chernoff2}
\end{align}

Thus, through the union bound, we obtain
\begin{align}
    \kappa^{(1)}_n&\le ({K_n}-1)\Big[ 2^{-m_n D\left(\tau\|\smash{p_0}\right)}+ 2^{-m_n D\left(1-\tau\|1-p_1\right)}\Big]\label{eq:replicadetectionlast}
\end{align}

Since the RHS of \eqref{eq:replicadetectionlast} has $2{K_n}-2=\mathcal{O}(n)$ terms decaying exponentially in~$m_n$, for any $m_n=\omega(\log n)$ we have 
\begin{align}
    \kappa^{(1)}_n \to 0 \:\text{ as } n\to\infty.
\end{align}

Observing that $n\sim \log m_n$ concludes the proof.
\qed
\subsection{Proof of Lemma~\ref{lem: deletion detection}}
\label{proof: deletion detection}
    For now, suppose that $\Phi$ is a useful remapping.
    Using Chebyshev's inequality~\cite[Theorem 4.2]{wasserman2004all} it is straightforward to prove that for any $\epsilon_n>0$
\begin{align}
    \gamma&\triangleq \Pr(|\mu(\Phi)-\Lambda_n q_0(\Phi)|>\Lambda_n {\epsilon}_n) \\ &= \mathcal{O}\left(\frac{1}{\Tilde{K}_n n \Lambda_n {\epsilon}_n}\right)\label{eq:alpha}
\end{align}
First, let
\begin{align}
    I_R&=\{r_1,\dots,r_{\tilde{K}_n}\} \label{eq:rjdefn}
\end{align}
and note ${L_{r_j,j}(\Phi)\sim\text{Binom}(\Lambda_n,q_1(\Phi))}$. Thus, from the Chernoff bound~\cite[Lemma 4.7.2]{ash2012information} we get
\begin{align}
    \beta_{r_j,j}&\triangleq\Pr(|L_{r_j,j}(\Phi)-\Lambda_n q_1(\Phi)|\ge {\epsilon}_n \Lambda_n)\\
    &\le 2^{-\Lambda_n D(q_1(\Phi)-{\epsilon}_n\|q_1(\Phi))}
    + 2^{-\Lambda_n D(1-q_1(\Phi)-{\epsilon}_n\|1-q_1(\Phi))}.\label{eq:beta}
\end{align}

Now, for notational brevity, let 
\begin{align}
    f(\epsilon) &\triangleq D(q-\epsilon\|q)
\end{align}
Then, one can simply verify the following
\begin{align}
    f^\prime(\epsilon) &= \log\frac{q}{1-q}-\log\frac{q-\epsilon}{1-q-\epsilon}\\
    f^{\prime\prime}(\epsilon)&=\frac{1}{\log e} \left[\frac{1}{q-\epsilon} + \frac{1}{1-q+\epsilon}\right]
\end{align}
Similarly, letting
\begin{align}
    g(\epsilon) &\triangleq D(1-q-\epsilon\|1-q)
\end{align}
we get
\begin{align}
    g^\prime(\epsilon) &= \log\frac{1-q}{q}-\log\frac{1-q-\epsilon}{q+\epsilon}\\
    g^{\prime\prime}(\epsilon)&=\frac{1}{\log e} \left[ \frac{1}{1-q-\epsilon} + \frac{1}{q+\epsilon}\right]
\end{align}
Observing that 
\begin{align}
    f(0)&=f^\prime(0)=0\\
    g(0)&=g^\prime(0)=0
\end{align}
and performing second-order MacLaurin series expansions on $f$ and $g$, we get for any $\epsilon<1$
\begin{align}
    f(\epsilon) &= c(q) \epsilon^2 + \mathcal{O}(\epsilon^3)\\
    g(\epsilon) &= c(q) \epsilon^2  + \mathcal{O}(\epsilon^3)
\end{align}
where
\begin{align}
    c(q)&\triangleq\frac{1}{\log e} \left[\frac{1}{q} + \frac{1}{1-q}\right]
\end{align}

Now, let $\Lambda_n = \Gamma_n \log n$ and $\epsilon_n =\Gamma_n^{-\nicefrac{1}{3}}$ and pick the threshold as $\hat{\tau}_n = 2\Lambda_n \epsilon_n$. Observe that since $\Gamma_n=\omega_n(1)$, we get
\begin{align}
    \hat{\tau}_n &= 2 \Lambda_n \epsilon_n  = o_n(\Lambda_n)\\
    \Lambda_n \epsilon_n^2 &= \Gamma_n^{\nicefrac{1}{3}} \log n = \omega_n(\log n)
\end{align}
Then, we have
\begin{align}
    \beta_{r_j,j} &\le 2^{1-\Lambda_n (c(q_1(\Phi)) \epsilon_{n}^2+\mathcal{O}(\epsilon_n^3))}\\
    &= 2^{1- c(q_1(\Phi))  \Gamma_n^{\nicefrac{1}{3}} \log n+\mathcal{O}(\epsilon_n^3))}
\end{align}
Note that with probability at least $1-\gamma-\beta_{r_j,j}$ we have
\begin{align}
    |\mu(\Phi)-\Lambda_n q_0(\Phi)|&\le \Lambda_n\epsilon_n\\
    |L_{r_j,j}(\Phi)-\Lambda_n q_1(\Phi)|&\ge \Lambda_n\epsilon_n
\end{align}
From the triangle inequality, we have
\begin{align}
    M_{r_j,j}(\Phi) &=|L_{r_j,j}(\Phi)-\mu(\Phi)|\\ &\ge \Lambda_n (|q_1(\Phi)-q_0(\Phi)|-2\epsilon_n)\\
    &\ge \hat{\tau}_n
\end{align}
for large $n$.
Therefore, from the union bound we have
\begin{align}
    \Pr(\exists j\in[\tilde{K}_n]:M_{r_j,j}(\Phi)\le \hat{\tau}_n)
    &\le \gamma + \sum\limits_{j=1}^{\tilde{K}_n} \beta_{r_j,j}\\
    &= \gamma + 2^{\log \tilde{K}_n - 1- c(q_1(\Phi)) \Gamma_n^{\nicefrac{1}{3}} \log n +\mathcal{O}(\epsilon_n^3))}
\end{align}
Since $\tilde{K}_n\le n$ and $\Lambda_n=\omega_n(\log n)$, we have
\begin{align}
   \lim\limits_{n\to\infty} \log \tilde{K}_n -c(q_1(\Phi)) \Gamma_n^{\nicefrac{1}{3}} \log n
   &= -\infty
\end{align}
Thus we have
\begin{align}
    \lim\limits_{n\to\infty} \Pr(\exists j\in[\tilde{K}_n]:M_{r_j,j}\le \hat{\tau}_n) &= 0
\end{align}

Next, we look at $i\neq r_j$. Repeating the same steps above, we get
\begin{align}
    \beta_{i,j}&\triangleq\Pr(|L_{i,j}(\Phi)-\Lambda_n q_0(\Phi)|\ge \epsilon_n \Lambda_n)\\
    &\le 2^{-\Lambda_n D(q_0(\Phi)-\epsilon_n\|q_0(\Phi))}
    + 2^{-\Lambda_n D(1-q_0(\Phi)-\epsilon_n\|1-q_0\Phi))}\\
    &= 2^{1- c(q_0(\Phi)) \Gamma_n^{\nicefrac{1}{3}} \log n+\mathcal{O}(\epsilon_n^3))}
\end{align}
Again, from the triangle inequality, we get
\begin{align}
    M_{i,j}(\Phi) &= 
    |L_{i,j}(\Phi)-\mu(\Phi)|\\
    &\le 2\epsilon_n\\&= \hat{\tau}_n
\end{align}
From the union bound, we obtain
\begin{align}
    \Pr(\exists j\in[\Tilde{K}_n]\:\exists i\in[n]\setminus\{r_j\}: M_{i,j}(\Phi)\ge \hat{\tau}_n) 
    &\le \gamma + \sum\limits_{j=1}^{\tilde{K}_n}\sum\limits_{i\neq r_j} \beta_{i,j}\\
    &\le \gamma + n^2 2^{1- c(q_0(\Phi))  \Gamma_n^{\nicefrac{1}{3}} \log n+\mathcal{O}(\epsilon_n^3))}
\end{align}
Since $\Lambda_n=\omega(\log n)$, as $n\to\infty$ we have 
\begin{align}
   \Pr(\exists j&\in[\Tilde{K}_n]\:\exists i\in[n]\setminus\{r_j\}:M_{i,j}(\Phi)\ge \hat{\tau}_n)\longrightarrow0.
\end{align}
Thus, for any useful remapping $\Phi$, the misdetection probability decays to zero as $n\to \infty$.

For any remapping $\Phi$ that is not useful, following the same steps, one can prove that
\begin{align}
    \Pr(\text{Remapping }\Phi \text{ is declared useful,} &\text{ even though it is not.})\notag\\
    &\le \gamma + \sum\limits_{i=1}^{n}\sum\limits_{j=1}^{\Tilde{K}_n} \Pr(M_{i,j}\ge \epsilon_n \Lambda_n)\\
    &\le \gamma + n^2 2^{1- c(q_0(\Phi))  \Gamma_n^{\nicefrac{1}{3}} \log n+\mathcal{O}(\epsilon_n^3))}\\
    &= o_n(1)
\end{align}
Since $|\mathcal{S}(\mathcal{X})|=|\mathcal{X}|!=\mathcal{O}_n(1)$, we have
\begin{align}
    \sum\limits_{\Phi:\text{not useful}} \Pr(\Phi \text{ is declared useful.})=o_n(1)
\end{align}
concluding the proof.
\qed
\subsection{Proof of Theorem~\ref{thm: noiseless}}
\label{proof: noiseless}
Let $\epsilon>0$ and denote by $\xi_n$ the error probability of the histogram-based repetition algorithm. Denote the $\epsilon$-typical set of sequences (with respect to $p_X$) of length $k=(1-\delta-\epsilon)n$ by $\smash{A_{\epsilon}^{(k)}(X)}$ and the pairwise collision probability between $X_1^n$ and $X_i^n$, given $\hat{K}_n=k$ and $\smash{X_1^n\in A_\epsilon^{(k)}(X)}$, by $P_{col,i}(k)$.
Since additional columns in $\bar{\mathbf{Y}}$ would decrease the pairwise collision probability between independent rows, we have
\begin{align}
    P_{col,i}(\hat{K}_n)&\le P_{col,i}(k),\hspace{1em} \forall\hat{K}_n\ge k
\end{align}
Since we perform exact row matching and the rows of $\bar{\mathbf{X}}$ are independent, we have
\begin{align}
    P_{col,i}(k) &= \Pr(\bar{X}_i^k=\bar{X}_1^k|\bar{X}_1^k\in A_{\epsilon}^{(k)}(X))\\
    &\le 2^{-k(H(X)-\epsilon)}
\end{align}
Thus, we can bound the probability of error $P_e$ as
\begin{align}
    P_e 
    &\le \sum\limits_{i=2}^{m_n} P_{col,i}(k)+\epsilon+\kappa_n+\xi_n\\
    &\le 2^{n R}  2^{-k(H(X)-\epsilon)}+\epsilon+\kappa_n+\xi_n
\end{align}
where $\kappa_n=\Pr(\hat{K}_n<k)$. Since $m_n$ is exponential in $n$, by Lemma~\ref{lem:histogram}, ${\xi_n\to0}$ as ${n\to\infty}$. Furthermore, $\hat{K}_n$ is a Binom($n,1-\delta$) random variable and from law of large numbers ${\kappa_n\to0}$ as ${n\to\infty}$. Thus $ P_e\le \epsilon$ as $n\to\infty$ if
\begin{align}
    R<(1-\delta-\epsilon)H(X).
\end{align}
Thus, we can argue that any database growth rate $R$ satisfying
\begin{align}
    R<(1-\delta)H(X)
\end{align}
is achievable, by taking $\epsilon$ small enough.
\qed

\end{document}